\documentclass[journal,twoside]{IEEEtran}

\usepackage[T1]{fontenc}
\usepackage{float}
\usepackage[utf8]{inputenc}

\usepackage{amsmath,amssymb,amsfonts,mathtools}
\usepackage{amsthm}
\usepackage{bm}
\usepackage{placeins}

\usepackage{cite}
\usepackage{url}

\usepackage{graphicx}
\usepackage{booktabs}
\usepackage{array}
\usepackage{multirow}

\usepackage{algorithm}
\usepackage{algpseudocode}

\usepackage{xcolor}

\usepackage{cuted}
\usepackage{capt-of}
\usepackage{dblfloatfix}

\usepackage{enumitem}

\setlist{
    topsep=1pt,
    itemsep=0pt,
    parsep=0pt,
    partopsep=0pt
}

\usepackage{float}

\usepackage{titlesec}

\titlespacing*{\section}
{0pt}
{0.5ex}
{0.2ex}

\titlespacing*{\subsection}
{0pt}
{0.4ex}
{0.15ex}

\titlespacing*{\subsubsection}
{0pt}
{0.3ex}
{0.1ex}

\allowdisplaybreaks
\AtBeginDocument{%
    \setlength{\abovedisplayskip}{2pt}%
    \setlength{\belowdisplayskip}{2pt}%
    \setlength{\abovedisplayshortskip}{1pt}%
    \setlength{\belowdisplayshortskip}{1pt}%
    \setlength{\jot}{1pt}%
}

\renewcommand{\arraystretch}{0.90}

\newtheorem{lemma}{Lemma}
\newtheorem{proposition}{Proposition}
\newtheorem{theorem}{Theorem}

\newtheorem{remark}{Remark}

\newcommand{\bbR}{\mathbb{R}}

\newcommand{\bbE}{\mathbb{E}}

\newcommand{\R}{\mathbb{R}}

\newcommand{\calB}{\mathcal{B}}

\newcommand{\calD}{\mathcal{D}}

\newcommand{\calL}{\mathcal{L}}
\newcommand{\calM}{\mathcal{M}}

\newcommand{\calS}{\mathcal{S}}

\newcommand{\calX}{\mathcal{X}}

\newcommand{\mat}[1]{\mathbf{#1}}
\newcommand{\vect}[1]{\bm{#1}}

\newcommand{\tran}{^{\mathsf T}}

\newcommand{\xv}{\vect{x}}
\newcommand{\uv}{\vect{u}}
\newcommand{\vv}{\vect{v}}
\newcommand{\wv}{\vect{w}}
\newcommand{\zv}{\vect{z}}
\newcommand{\sv}{\vect{s}}

\newcommand{\pv}{\mathbf{p}}

\newcommand{\xiv}{\vect{\xi}}

\newcommand{\thetav}{\vect{\theta}}
\newcommand{\lambdav}{\vect{\lambda}}

\newcommand{\bv}{\vect{b}}

\newcommand{\hv}{\vect{h}}
\newcommand{\cv}{\vect{c}}

\newcommand{\Am}{\mat{A}}
\newcommand{\Bm}{\mat{B}}
\newcommand{\Cm}{\mat{C}}
\newcommand{\Dm}{\mat{D}}
\newcommand{\Em}{\mat{E}}
\newcommand{\Fm}{\mat{F}}

\newcommand{\Hm}{\mat{H}}
\newcommand{\Id}{\mat{I}}

\newcommand{\Lm}{\mat{L}}
\newcommand{\Mm}{\mat{M}}
\newcommand{\Pm}{\mat{P}}
\newcommand{\Qm}{\mat{Q}}
\newcommand{\Rm}{\mat{R}}
\newcommand{\Sm}{\mat{S}}

\newcommand{\Wm}{\mat{W}}

\newcommand{\Zm}{\mat{Z}}

\newcommand{\Lam}{\boldsymbol{\Lambda}}

\DeclareMathOperator{\diagop}{diag}

\DeclareMathOperator{\colop}{col}

\DeclareMathOperator*{\argminop}{arg\,min}

\newcommand{\lmax}{\lambda_{\max}}

\newcommand{\norm}[1]{\left\lVert #1\right\rVert}

\newcommand{\setc}[1]{\left\{#1\right\}}

\usepackage{hyperref}

\hypersetup{
    hidelinks,
    pdfborder={0 0 0}
}
\usepackage{tikz}
\usepackage{hyperref}
\usetikzlibrary{svg.path}

\usepackage{orcidlink}

\usepackage{graphicx}
\usepackage{booktabs}
\usepackage{placeins}
\usepackage{amsmath,amssymb,bm}

\hypersetup{
    colorlinks=true,
    linkcolor=blue,
    citecolor=blue,
    urlcolor=blue,
    filecolor=blue
}

\newcommand{\safeincludegraphics}[2][]{%
\IfFileExists{#2}{%
    \includegraphics[#1]{#2}%
}{%
    \fbox{%
    \begin{minipage}[c][0.18\textheight][c]{0.92\linewidth}
    \centering
    Missing figure: \texttt{\detokenize{#2}}\\[1mm]
    Replace this box by the experimental plot.
    \end{minipage}%
    }%
}%
}

\title{Stability-Aware Imitation Learning from Model Predictive Control for Autonomous Vehicle Lateral Control: Exact Q-Loss and a Novel Training Procedure}

\author{
Tien Dat Vu,
\IEEEmembership{Student Member, IEEE},
Minh Quan Nguyen,
Anh Tuan Vu,
Thanh Tung Nguyen,
and Minh Doan, \IEEEmembership{Member, IEEE}
\thanks{T. D. Vu and M. Doan are with the Faculty of Mechanical Engineering,
Ho Chi Minh City University of Technology (HCMUT), Vietnam National University
Ho Chi Minh City (VNU-HCM), Ho Chi Minh City, Vietnam.
E-mail: \{dat.vuv, minh.doan\}@hcmut.edu.vn.}
\thanks{M. Q. Nguyen, A. T. Vu, and T. T. Nguyen are with the Department of Automotive Engineering,
Hanoi University of Science and Technology (HUST), Hanoi, Vietnam.}
\thanks{Corresponding author: Minh Doan
(e-mail: minh.doan@hcmut.edu.vn).}
}

\begin{document}

\raggedbottom

\maketitle

\begin{abstract}
This paper develops a certified imitation-learning framework for approximating model predictive control (MPC) policies with feedforward neural controllers and validates it on autonomous-vehicle lateral control. An exact finite-horizon \(Q\)-loss is constructed by fixing the learner's first steering action in the expert MPC problem and re-optimizing the remaining horizon, thereby measuring its downstream optimal-control consequence rather than only pointwise action mismatch. The neural policy is represented as a linear fractional transformation (LFT) interconnection with activation nonlinearities described by sector integral quadratic constraints (IQCs). Combined with a quadratic Lyapunov condition, this representation yields a differentiable certification margin based on the largest eigenvalue of the Lyapunov--IQC matrix. The margin is enforced during training through a logarithmic barrier, while certified Dataset Aggregation (DAgger) and safe projection keep data-aggregation rollouts within the certified policy set. Experiments on a CAD-referenced autonomous-vehicle platform with AprilTag localization and real-time steering demonstrate the resulting closed-loop performance.
\end{abstract}

\begin{IEEEkeywords}
Model predictive control, imitation learning, exact $Q$-loss, neural control, integral quadratic constraints, Lyapunov stability, DAgger, safe projection, autonomous vehicle lateral control.
\end{IEEEkeywords}

\section{Introduction}

\IEEEPARstart{M}{odel} predictive control (MPC) is attractive for
autonomous-vehicle lateral control because prediction, actuator and
steering-rate constraints, terminal requirements, and finite-horizon
performance can be handled within a single constrained optimization
problem. Its main limitation is computational: repeated online
optimization can be expensive for embedded deployment. Neural imitation
offers inexpensive real-time evaluation, but a policy that closely
matches expert steering commands is not necessarily safe or
closed-loop reliable. In particular, two steering actions with similar
instantaneous imitation errors may have very different finite-horizon
consequences near high-curvature segments or active constraints, while
learner deployment can induce states absent from the original expert
distribution.

This paper addresses these two issues jointly through a certified
imitation-learning framework for replacing an online MPC controller by
a feedforward neural policy. Its central idea is to transfer not only
the expert action but also the finite-horizon value structure that
produced that action. For each learner steering command, the action is
inserted as the first MPC move and the remaining horizon is
re-optimized, yielding an exact-Q quantity that measures its downstream
optimal-control consequence. This task-aware learning objective is
coupled with a Lyapunov--integral quadratic constraint (IQC)
certificate so that closed-loop certification is enforced during
learning rather than checked only after training. Certified Dataset
Aggregation (DAgger) further restricts learner-induced rollouts to
certified policies, while safe projection prevents accepted parameter
updates from leaving the certified set.

MPC and its learned approximations have been studied extensively.
Classical MPC provides systematic treatment of constrained prediction
and control \cite{Mayne2000,Rawlings2017,Camacho2013,Garcia1989},
while explicit and neural approximations reduce online computational
cost and can provide performance guarantees
\cite{Hertneck2018,Karg2020}. Learning from demonstrations and
imitation learning transfer expert behavior to neural controllers
\cite{Pomerleau1989,Argall2009,Hussein2017}, and DAgger-type methods
reduce distribution shift by querying the expert at learner-induced
states \cite{Ross2011,Ho2016}; safety-aware variants such as
DropoutDAgger additionally account for uncertain learner rollouts
\cite{Menda2017}. In parallel, differentiable optimization has been
embedded directly into learning through OptNet
\cite{Amos2017}, differentiable MPC \cite{Amos2018}, and
differentiable convex optimization layers \cite{Agrawal2019}.
However, these approaches do not by themselves provide a closed-loop
Lyapunov certificate for the deployed neural policy.

Neural-policy certification has therefore developed largely as a
separate line of work. Integral quadratic constraints provide a
systematic representation of nonlinear feedback operators
\cite{Megretski1997}, linear matrix inequalities support Lyapunov and
robust-control analysis \cite{Boyd1994}, IQC methods have been extended
to optimization-related nonlinear mappings \cite{Lessard2016}, and
Lipschitz-based bounds enable tractable neural-network analysis
\cite{Fazlyab2019,Virmaux2018}. These tools can certify a fixed neural
controller, but they do not directly provide a task-aware mechanism for
learning an MPC policy while maintaining certification as the policy
and its training distribution evolve.

The resulting gap is therefore structural. Existing neural MPC
approximations mainly learn to reproduce expert actions without
explicitly measuring the finite-horizon consequence of a learner
action, whereas existing certification methods mainly verify a fixed
policy after it has been obtained. What is missing is a framework in
which the MPC value structure and the stability certificate both shape
the imitation process itself. The proposed method closes this gap by
combining exact-Q supervision with Lyapunov--IQC-constrained learning
and certified data aggregation.

The main contributions are as follows. First, an exact MPC-induced
finite-horizon Q-loss evaluates each learner steering action through its
downstream optimal-control consequence rather than pointwise action
error alone. Second, a path-referenced lateral-control formulation
connects bicycle dynamics, road curvature, lateral and heading errors,
yaw-rate feedforward, steering constraints, and the finite-horizon
expert MPC problem. Third, the neural controller is represented through
an LFT-like interconnection with sector-IQC-described activation
nonlinearities, yielding a quadratic Lyapunov--IQC certificate and an
explicit largest-eigenvalue certification margin. Fourth, this margin
is embedded into policy learning through an interior-barrier objective
and a safe-projection mechanism. Fifth, certified DAgger restricts
learner-induced data collection to certified neural policies. Finally,
the complete framework is validated on a physical autonomous-vehicle
platform using a CAD-referenced path, onboard lane perception,
AprilTag-based external localization for performance evaluation, and
real-time steering actuation, with assessment based on steering
imitation, exact-Q degradation, and measured closed-loop tracking
performance.

\textbf{Notation:} Let $\bbR^n$ denote the $n$-dimensional Euclidean
space. For $\Pm=\Pm\tran$, the relations $\Pm\succ0$ and
$\Pm\succeq0$ denote positive definiteness and positive
semidefiniteness, respectively. For $\zv\in\bbR^n$,
$\norm{\zv}_{\Pm}^{2}:=\zv\tran\Pm\zv$. The identity matrix is denoted
by $\Id$. For a symmetric matrix $\Mm$, $\lmax(\Mm)$ denotes its
largest eigenvalue. The symbol
\(
\mathbf{1}
\)
denotes the vector of all ones of compatible dimension and the symbol
\(
\lesssim
\)
denotes an asymptotic inequality valid up to higher-order terms.

\section{Vehicle Lateral Dynamics and Error-State Model}

\subsection{Continuous-time bicycle model}


The experimental vehicle is modeled by the standard small-angle
bicycle lateral dynamics. Let $v_y$ be the lateral velocity at the
center of gravity, $r=\dot{\psi}$ the yaw rate, $V_x$ the longitudinal
speed, $m$ the vehicle mass, $I_z$ the yaw moment of inertia,
$L_f$ and $L_r$ the distances from the center of gravity to the
front and rear axles, and $\delta_f$ the front steering angle.
The vehicle geometry and the adopted sign conventions are illustrated
in Fig.~\ref{fig:bicycle_model}.

\begin{figure}[H]
    \centering
    \includegraphics[width=\columnwidth]{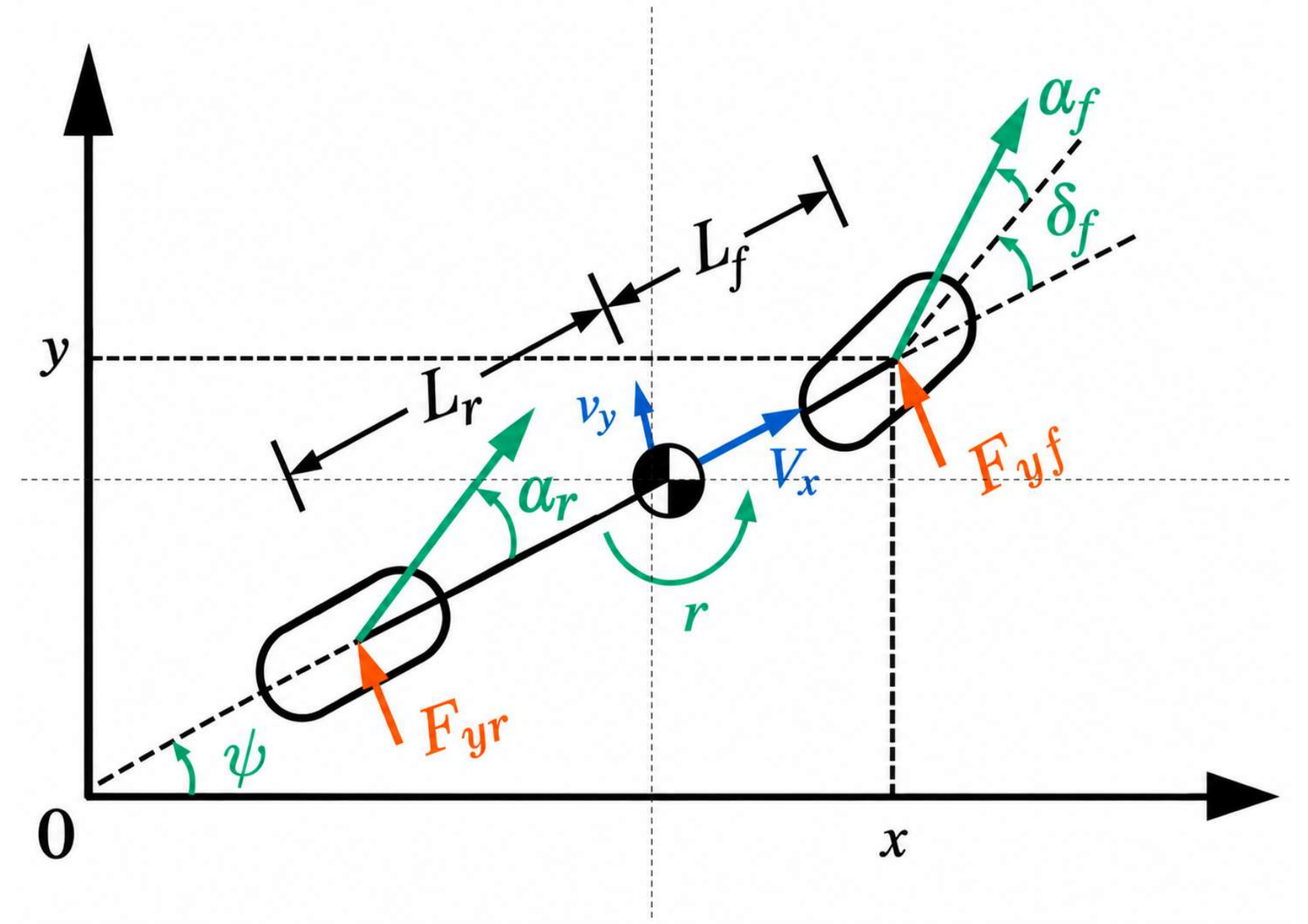}
    \caption{Bicycle model geometry and the adopted sign conventions.}
    \label{fig:bicycle_model}
\end{figure}

\begin{align}
m(\dot v_y+V_xr)
&=
F_{yf}+F_{yr},
\label{eq:lateral_force_balance}\\
I_z\dot r
&=
L_fF_{yf}-L_rF_{yr}.
\label{eq:yaw_moment_balance}
\end{align}

For the operating regime considered in this work, the slip angles
remain sufficiently small so that the tire forces can be approximated
by linear cornering-stiffness relations:
\begin{equation}
F_{yf}
=
2C_{\alpha f}
\left(
\delta_f-\frac{v_y+L_fr}{V_x}
\right).
\label{eq:front_tire_force}
\end{equation}
\begin{equation}
F_{yr}
=
2C_{\alpha r}
\left(
-\frac{v_y-L_rr}{V_x}
\right).
\label{eq:rear_tire_force}
\end{equation}

Substituting \eqref{eq:front_tire_force}--\eqref{eq:rear_tire_force}
into \eqref{eq:lateral_force_balance}--\eqref{eq:yaw_moment_balance}
gives
\begin{equation}
\begin{aligned}
\dot v_y
={}&
-\frac{2C_{\alpha f}+2C_{\alpha r}}
      {mV_x}\,v_y
\\
&-
\left(
V_x+
\frac{2C_{\alpha f}L_f-2C_{\alpha r}L_r}
     {mV_x}
\right)r
\\
&+
\frac{2C_{\alpha f}}{m}\,\delta_f .
\end{aligned}
\label{eq:vy_model}
\end{equation}
\begin{equation}
\begin{aligned}
\dot r
={}&
-\frac{2C_{\alpha f}L_f-2C_{\alpha r}L_r}
      {I_zV_x}\,v_y
\\
&-
\frac{2C_{\alpha f}L_f^2+2C_{\alpha r}L_r^2}
     {I_zV_x}\,r
\\
&+
\frac{2C_{\alpha f}L_f}{I_z}\,\delta_f .
\end{aligned}
\label{eq:r_model}
\end{equation}

These two equations reveal the essential physical structure. The
steering angle affects the vehicle through the front lateral tire
force; the lateral velocity and yaw rate are coupled through both tire
slip and the kinematic term \(V_xr\); and the path curvature appears
later through the desired yaw motion. This is the layer at which a
controller must be physically meaningful: it must not only reduce an
abstract tracking error but also generate steering commands compatible
with the yaw inertia, tire stiffness, and longitudinal speed.

\subsection{Path-referenced error coordinates}

Let the reference path be a smooth planar curve
\begin{equation}
\bm{p}_{\mathrm{ref}}(s)
=
\begin{bmatrix}
x_{\mathrm{ref}}(s)\\
y_{\mathrm{ref}}(s)
\end{bmatrix},
\end{equation}
where \(s\) is the path coordinate. Its tangent angle and curvature
are
\begin{equation}
\psi_{\mathrm{ref}}(s)
=
\mathrm{atan2}
\left(
\frac{dy_{\mathrm{ref}}}{ds},
\frac{dx_{\mathrm{ref}}}{ds}
\right).
\label{eq:path_heading}
\end{equation}
\begin{equation}
\kappa(s)
=
\frac{
x'_{\mathrm{ref}}(s)y''_{\mathrm{ref}}(s)
-
y'_{\mathrm{ref}}(s)x''_{\mathrm{ref}}(s)
}{
\left[
(x'_{\mathrm{ref}}(s))^2+
(y'_{\mathrm{ref}}(s))^2
\right]^{3/2}
}.
\label{eq:path_curvature}
\end{equation}

Given the current vehicle position
\(\bm{p}=[x\ y]\tran\), the nearest path coordinate is computed as
\begin{equation}
s^\star
=
\argminop_s
\norm{
\bm{p}-\bm{p}_{\mathrm{ref}}(s)
}^{2}.
\end{equation}

The lateral and heading errors are then defined with respect to the
path tangent:
\begin{equation}
\begin{aligned}
e_y
={}&
-\sin\psi_{\mathrm{ref}}(s^\star)
\left(
x-x_{\mathrm{ref}}(s^\star)
\right)
\\
&+
\cos\psi_{\mathrm{ref}}(s^\star)
\left(
y-y_{\mathrm{ref}}(s^\star)
\right).
\end{aligned}
\label{eq:lateral_error_def}
\end{equation}
\begin{equation}
e_\psi
=
\psi-\psi_{\mathrm{ref}}(s^\star).
\label{eq:heading_error_def}
\end{equation}

Under the small-angle approximation, the error kinematics are
\begin{align}
\dot e_y
&=
v_y+V_xe_\psi,
\label{eq:ey_dot}\\
\dot e_\psi
&=
r-\dot\psi_{\mathrm{des}}.
\label{eq:epsi_dot}
\end{align}
The desired yaw rate is generated by the curvature of the reference
path,
\begin{equation}
\dot\psi_{\mathrm{des}}
=
V_x\kappa(s^\star).
\label{eq:desired_yaw_rate}
\end{equation}

The state used for lateral control is
\begin{equation}
\begin{aligned}
\xv
&=
\colop
\left(
e_y,\dot e_y,e_\psi,\dot e_\psi
\right).
\end{aligned}
\label{eq:error_state}
\end{equation}

Thus, stabilizing \(\xv\) to the origin means that the vehicle
converges simultaneously in lateral displacement, lateral-error
velocity, heading misalignment, and heading-error rate.

\subsection{Linear error-state realization}


Combining \eqref{eq:vy_model}--\eqref{eq:r_model} with
\eqref{eq:ey_dot}--\eqref{eq:epsi_dot}, and defining
\(
d(t):=\dot\psi_{\mathrm{des}}(t)
\),
one obtains the continuous-time path-referenced realization
\begin{equation}
\dot\xv(t)
=
\Am_c\xv(t)
+
\Bm_c\delta_f(t)
+
\Em_c d(t).
\label{eq:continuous_error_model}
\end{equation}
For compactness, define the vehicle-dependent coefficients
\begin{align}
a_{v1}
&:=
-\frac{2C_{\alpha f}+2C_{\alpha r}}
       {mV_x},
&
a_{v2}
&:=
\frac{2C_{\alpha f}+2C_{\alpha r}}
     {m},
\label{eq:vehicle_coefficients_1}\\
a_{v3}
&:=
-\frac{2C_{\alpha f}L_f-2C_{\alpha r}L_r}
       {mV_x},
&
b_v
&:=
\frac{2C_{\alpha f}}{m},
\label{eq:vehicle_coefficients_2}\\
a_{r1}
&:=
-\frac{2C_{\alpha f}L_f-2C_{\alpha r}L_r}
       {I_zV_x},
&
a_{r2}
&:=
\frac{2C_{\alpha f}L_f-2C_{\alpha r}L_r}
     {I_z},
\label{eq:vehicle_coefficients_3}\\
a_{r3}
&:=
-\frac{
2C_{\alpha f}L_f^2+
2C_{\alpha r}L_r^2
}{
I_zV_x
},
&
b_r
&:=
\frac{2C_{\alpha f}L_f}{I_z}.
\label{eq:vehicle_coefficients_4}
\end{align}

Using these coefficients, the state matrix takes the compact form
\begin{equation}
\Am_c
=
\begin{bmatrix}
0 & 1 & 0 & 0\\
0 & a_{v1} & a_{v2} & a_{v3}\\
0 & 0 & 0 & 1\\
0 & a_{r1} & a_{r2} & a_{r3}
\end{bmatrix}.
\label{eq:Ac_vehicle}
\end{equation}

The steering-input and path-curvature matrices are
\begin{equation}
\Bm_c
=
\begin{bmatrix}
0\\
b_v\\
0\\
b_r
\end{bmatrix},
\qquad
\Em_c
=
\begin{bmatrix}
0\\
a_{v3}-V_x\\
0\\
a_{r3}
\end{bmatrix}.
\label{eq:BcEc_vehicle}
\end{equation}


\begin{remark}
The term
\(
\Em_c d(t)
\)
is not an arbitrary disturbance but a known reference forcing induced by the road geometry. In straight segments,
\(
\kappa=0
\),
and the model reduces to ordinary lateral-error regulation; in curved segments, steering is required not only to correct existing tracking errors but also to anticipate the yaw motion imposed by the path curvature. Nominal feedforward may compensate this geometric effect, while MPC remains responsible for feedback correction, preview-based optimization, and enforcement of steering, steering-rate, and state constraints. Thus, feedforward compensation and predictive control are complementary rather than redundant.
\end{remark}


Under zero-order hold over each sampling interval
\(
[t_k,t_{k+1})
\),
with
\(
\uv(t)=\uv_k
\)
and
\(
d(t)=d_k
\), Let
\(
T_s:=t_{k+1}-t_k>0
\)
denote the prescribed controller sampling period.
The continuous-time model admits the exact discretization
\(
\xv(t_{k+1})
=
e^{\Am_cT_s}\xv(t_k)
+
\int_{0}^{T_s}e^{\Am_c\tau}\Bm_c\,d\tau\,\uv_k
+
\int_{0}^{T_s}e^{\Am_c\tau}\Em_c\,d\tau\,d_k
\).
Defining
\(
\Am_p:=e^{\Am_cT_s}
\),
\(
\Bm_p:=\int_{0}^{T_s}e^{\Am_c\tau}\Bm_c\,d\tau
\),
and
\(
\Em_p:=\int_{0}^{T_s}e^{\Am_c\tau}\Em_c\,d\tau
\),
with
\(
\xv_k:=\xv(t_k)
\),
the exact discrete-time model is
\begin{equation}
\xv_{k+1}
=
\Am_p\xv_k
+
\Bm_p\uv_k
+
\Em_p d_k.
\label{eq:discrete_vehicle_model}
\end{equation}

In the certification derivation below, \(d_k\) is either compensated
through the nominal feedforward term or treated as a known exogenous
command around which the local regulation model is formed. The
resulting regulation form is
\begin{equation}
\xv_{k+1}
=
\Am_p\xv_k
+
\Bm_p\uv_k.
\label{eq:regulation_vehicle_model}
\end{equation}

\section{Expert MPC and Exact Finite-Horizon Q-Function}
\label{sec:exact_q}

\subsection{Expert MPC Problem}
\label{subsec:expert_mpc}

Let
\(
N\in\mathbb{N}_{>0}
\)
denote the prediction-horizon length,
\(
\bar{\xv}
\)
the measured current error state,
\(
\uv_{-1}
\)
the previous steering command, and
\(
d_{0:N-1}:=(d_0,\ldots,d_{N-1})
\)
the known curvature preview. Define the MPC context
\(
\bar{\xi}:=(\bar{\xv},\uv_{-1},d_{0:N-1})
\)
and
\(
\Delta\uv_k:=\uv_k-\uv_{k-1}
\).
The stage and terminal costs are


\begin{equation}
\begin{aligned}
\ell_k
&:=
\norm{\xv_k}_{\Qm}^{2}
+
\norm{\uv_k}_{\Rm}^{2}
+
\norm{\Delta\uv_k}_{\Sm}^{2}
\\
&\quad
+
\rho_s\mathbf{1}\tran\sv_k
+
\norm{\sv_k}_{\Zm_s}^{2},
\\[1mm]
V_f(\xv_N,\sv_f)
&:=
\norm{\xv_N}_{\Qm_f}^{2}
+
\rho_f\mathbf{1}\tran\sv_f
\\
&\quad
+
\norm{\sv_f}_{\Zm_f}^{2}.
\end{aligned}
\label{eq:expert_costs}
\end{equation}

where
\(
\Qm,\Qm_f,\Sm,\Zm_s,\Zm_f\succeq0
\),
\(
\Rm\succ0
\),
and
\(
\rho_s,\rho_f>0
\).

The expert MPC solves
\begingroup
\setlength{\jot}{2pt}
\begin{subequations}
\label{eq:expert_ocp}

\begin{equation}
J^\star(\bar{\xi})
=
\min_{\substack{
\xv_{0:N},\,\uv_{0:N-1}\\
\sv_{0:N-1},\,\sv_f}}
\left\{
\sum_{k=0}^{N-1}\ell_k
+
V_f(\xv_N,\sv_f)
\right\},
\label{eq:expert_ocp_cost}
\end{equation}

\begin{equation}
\xv_0=\bar{\xv},
\qquad
\xv_{k+1}
=
\Am_p\xv_k+\Bm_p\uv_k+\Em_p d_k,
\label{eq:expert_ocp_dynamics}
\end{equation}

\begin{equation}
\uv_{\min}
\leq
\uv_k
\leq
\uv_{\max},
\qquad
\Delta\uv_{\min}
\leq
\Delta\uv_k
\leq
\Delta\uv_{\max},
\label{eq:expert_ocp_input}
\end{equation}

\begin{equation}
\Hm_x\xv_k
\leq
\hv_x+\sv_k,
\qquad
\sv_k\succeq\mathbf0,
\label{eq:expert_ocp_state}
\end{equation}

\begin{equation}
H_f\xv_N
\leq
h_f+\sv_f,
\qquad
\sv_f\succeq\mathbf0,
\label{eq:expert_ocp_terminal}
\end{equation}

\end{subequations}
\endgroup
where
\(
k=0,\ldots,N-1
\)
in \eqref{eq:expert_ocp_dynamics}--\eqref{eq:expert_ocp_state}.
The steering and steering-rate bounds are hard, whereas
\(
\sv_k
\)
and
\(
\sv_f
\)
provide penalized relaxation of the state and terminal constraints.
The nominal terminal set is
\begin{equation}
\calX_f
:=
\setc{
\xv\in\R^{n_x}:H_f\xv\leq h_f
}.
\label{eq:terminal_set}
\end{equation}


If
\(
\uv_{0:N-1}^{\star}(\bar{\xi})
=
(\uv_0^\star,\ldots,\uv_{N-1}^\star)
\)
denotes the optimal steering sequence solving
\eqref{eq:expert_ocp}, then the expert MPC policy
\(
\pi_E
\)
is defined by the first receding-horizon input as
\begin{equation}
\pi_E(\bar{\xi})
:=
\uv_0^\star(\bar{\xi}).
\label{eq:expert_policy}
\end{equation}

\subsection{Exact MPC-Induced Q-Function}
\label{subsec:exact_q}


The conventional imitation target
\(
\pi_E(\bar{\xi})
\)
specifies which steering action the expert selects, but it does not
quantify the physical consequence of choosing a different first action.
To recover this information, let
\(
\bar{\uv}
\)
be a feasible candidate steering action and constrain the first MPC
input to
\(
\uv_0=\bar{\uv}
\).
The remaining horizon is then re-optimized by the expert MPC.
Thus, the resulting cost isolates the downstream consequence of taking
\(
\bar{\uv}
\)
at the current step while allowing the controller to recover optimally
thereafter. This construction defines the exact MPC-induced
\(Q\)-function associated with the candidate first action.

The resulting exact MPC-induced \(Q\)-function is
\begingroup
\setlength{\jot}{2pt}
\begin{subequations}
\label{eq:exact_q_ocp}

\begin{equation}
Q_E(\bar{\xi},\bar{\uv})
=
\min_{\substack{
\xv_{0:N},\,\uv_{0:N-1}\\
\sv_{0:N-1},\,\sv_f}}
\left\{
\sum_{k=0}^{N-1}\ell_k
+
V_f(\xv_N,\sv_f)
\right\}.
\label{eq:exact_q_cost}
\end{equation}

\begin{equation}
\xv_0=\bar{\xv},
\qquad
\uv_0=\bar{\uv}.
\label{eq:exact_q_initial}
\end{equation}

\begin{equation}
\begin{aligned}
\xv_{k+1}
&=
\Am_p\xv_k
+
\Bm_p\uv_k
+
\Em_p d_k,
\\[-0.1em]
&\hspace{2em}
k=0,\ldots,N-1.
\end{aligned}
\label{eq:exact_q_dynamics}
\end{equation}

\begin{equation}
\uv_{\min}
\leq
\uv_k
\leq
\uv_{\max},
\qquad
k=0,\ldots,N-1.
\label{eq:exact_q_input}
\end{equation}

\begin{equation}
\Delta\uv_{\min}
\leq
\Delta\uv_k
\leq
\Delta\uv_{\max},
\qquad
k=0,\ldots,N-1.
\label{eq:exact_q_rate}
\end{equation}

\begin{equation}
\begin{aligned}
\Hm_x\xv_k
&\leq
\hv_x+\sv_k,
\qquad
\sv_k\succeq\mathbf{0},
\\[-0.1em]
&\hspace{2em}
k=0,\ldots,N-1.
\end{aligned}
\label{eq:exact_q_state}
\end{equation}

\begin{equation}
H_f\xv_N
\leq
h_f+\sv_f,
\qquad
\sv_f\succeq\mathbf{0}.
\label{eq:exact_q_terminal}
\end{equation}

\end{subequations}
\endgroup

Thus,
\(
Q_E(\bar{\xi},\bar{\uv})
\)
is the minimum finite-horizon cost attainable after forcing the
current steering action to be
\(
\bar{\uv}
\)
and allowing the expert MPC to optimally re-plan the remaining horizon.
It therefore measures the downstream consequence of a candidate
action rather than merely its instantaneous mismatch with the expert.

For the learner
\(
\pi_{\thetav}
\),
define the exact \(Q\)-gap
\begin{equation}
\begin{aligned}
\Delta_Q(\bar{\xi};\thetav)
:={}&
Q_E\bigl(
\bar{\xi},
\pi_{\thetav}(\bar{\xi})
\bigr)
\\[-0.1em]
&-
Q_E\bigl(
\bar{\xi},
\pi_E(\bar{\xi})
\bigr).
\end{aligned}
\label{eq:q_gap}
\end{equation}

\begin{proposition}[Expert action minimizes the exact \(Q\)-function]
\label{prop:q_minimizer}
Suppose \eqref{eq:expert_ocp} is feasible at
\(
\bar{\xi}
\)
and its first optimal input is unique. Then, for every feasible
\(
\bar{\uv}
\),
\begin{equation}
Q_E(\bar{\xi},\bar{\uv})
\geq
Q_E\bigl(
\bar{\xi},
\pi_E(\bar{\xi})
\bigr).
\label{eq:q_min_property}
\end{equation}
Moreover, if
\(
\bar{\uv}\neq\pi_E(\bar{\xi})
\),
then the inequality is strict.
\end{proposition}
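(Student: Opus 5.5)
The plan is to compare feasible sets. The problem defining \(Q_E(\bar{\xi},\bar{\uv})\) in \eqref{eq:exact_q_ocp} is exactly the expert problem \eqref{eq:expert_ocp} with the single extra equality constraint \(\uv_0=\bar{\uv}\). Everything else is identical: the cost \eqref{eq:expert_costs}, the dynamics, the hard input and rate bounds (with \(\uv_{-1}\) taken from \(\bar{\xi}\)), and the softened state and terminal constraints.

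Hence for every feasible \(\bar{\uv}\) the feasible set of \eqref{eq:exact_q_ocp} is a subset of that of \eqref{eq:expert_ocp}, and
\[
Q_E(\bar{\xi},\bar{\uv})\geq J^\star(\bar{\xi}).
\]
Here "feasible" means the constrained problem has a feasible point. Because the state constraints are softened by slacks, this amounts to \(\bar{\uv}\) satisfying the hard input and rate bounds at \(k=0\) and admitting a continuation satisfying them.

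Next I will show \(Q_E(\bar{\xi},\pi_E(\bar{\xi}))=J^\star(\bar{\xi})\). Take an optimal tuple of \eqref{eq:expert_ocp}. It exists because the problem is feasible, the inputs lie in a compact box, states and slacks are determined or bounded below, and the cost is a coercive-in-slacks convex quadratic plus linear terms bounded below on the feasible set. This tuple has \(\uv_0=\uv_0^\star=\pi_E(\bar{\xi})\), so it is feasible for the restricted problem with \(\bar{\uv}=\pi_E(\bar{\xi})\) and attains \(J^\star\). Combined with the inequality above, this gives equality, and \eqref{eq:q_min_property} follows.

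For strictness, suppose \(\bar{\uv}\neq\pi_E(\bar{\xi})\) but \(Q_E(\bar{\xi},\bar{\uv})=J^\star(\bar{\xi})\). I first need the restricted minimum to be attained. I plan to argue this the same way: the restricted feasible set is nonempty and closed, the inputs are bounded, the states are affine in the inputs, and the slacks can be restricted to a compact sublevel set since \(\Zm_s,\Zm_f\succeq0\) and \(\rho_s,\rho_f>0\) make the cost grow at least linearly in the slacks. A restricted minimizer then attains \(J^\star\) while remaining feasible for \eqref{eq:expert_ocp}, so it is an expert optimizer with first input \(\bar{\uv}\neq\pi_E(\bar{\xi})\). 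This contradicts uniqueness of the first optimal input, which I read as "all optimizers share the same \(\uv_0\)".

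The main obstacle is this existence-of-minimizer step, since without attainment equality of infima would not produce a competing optimizer. I will handle it via the compactness argument above. Alternatively, I can note that \(\Rm\succ0\) together with convexity makes the set of optimal input sequences a nonempty closed convex set on which \(\uv_0\) is constant by hypothesis; I would state this existence explicitly as part of the proof.
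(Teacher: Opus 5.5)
Your proposal is correct and takes essentially the paper's route: the fixed-action problem is the expert problem plus the single equality \(\uv_0=\bar{\uv}\), so \(Q_E(\bar{\xi},\bar{\uv})\geq J^\star(\bar{\xi})\), with equality at \(\bar{\uv}=\pi_E(\bar{\xi})\), and strictness follows from uniqueness of the first optimal input. Your attainment argument for the restricted problem (bounded inputs, affine states, slacks confined to compact sublevel sets by \(\rho_s,\rho_f>0\)) supplies a step the paper's proof uses only implicitly; your alternative remark is looser, since convexity alone does not give existence, and \(\Rm\succ0\) in fact makes the optimal input sequence unique outright.
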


\begin{IEEEproof}
The optimization defining
\(
Q_E(\bar{\xi},\bar{\uv})
\)
is identical to the expert MPC problem
\eqref{eq:expert_ocp}, except for the additional equality
\(
\uv_0=\bar{\uv}
\).
Hence, its feasible set is a restriction of the expert feasible set.
For
\(
\bar{\uv}=\pi_E(\bar{\xi})
\),
the expert optimizer remains feasible and attains
\(
J^\star(\bar{\xi})
\).
No other feasible fixed first action can yield a smaller optimum.
Uniqueness of the expert first input gives strictness whenever
\(
\bar{\uv}\neq\pi_E(\bar{\xi})
\).
\end{IEEEproof}

Consequently, whenever the learner-induced fixed-action problem is
feasible,
\begin{equation}
\Delta_Q(\bar{\xi};\thetav)\geq0,
\qquad
\Delta_Q(\bar{\xi};\thetav)=0
\iff
\pi_{\thetav}(\bar{\xi})
=
\pi_E(\bar{\xi}).
\label{eq:q_gap_nonnegative}
\end{equation}

\begin{remark}[Why exact \(Q\) rather than instantaneous imitation]
\label{rem:q_vs_instantaneous}
Let
\(
\delta\uv
:=
\pi_{\thetav}(\bar{\xi})
-
\pi_E(\bar{\xi})
\),
where
\(
\pi_{\thetav}
\)
denotes the learner steering policy parameterized by
\(
\thetav
\).
An instantaneous imitation loss penalizes
\(
\norm{\delta\uv}_{\Wm_u}^{2}
\)
through the fixed metric
\(
\Wm_u
\),
independent of the downstream vehicle response.
In contrast, if
\(
Q_E(\bar{\xi},\uv)
\)
is twice differentiable and
\(
\pi_E(\bar{\xi})
\)
is an interior minimizer, then locally
\(
\Delta_Q(\bar{\xi};\thetav)
=
\frac{1}{2}
\delta\uv\tran
\Hm_Q(\bar{\xi})
\delta\uv
+
o(\norm{\delta\uv}^{2})
\),
where
\(
\Hm_Q(\bar{\xi})
:=
\left.
\nabla_{\uv\uv}^{2}
Q_E(\bar{\xi},\uv)
\right|_{\uv=\pi_E(\bar{\xi})}
\succeq0
\).
Hence,
\(
\frac{1}{2}
\lambda_{\min}(\Hm_Q)
\norm{\delta\uv}^{2}
\lesssim
\Delta_Q
\lesssim
\frac{1}{2}
\lambda_{\max}(\Hm_Q)
\norm{\delta\uv}^{2}
\),
so equal steering mismatches need not have equal physical consequences. In the unconstrained quadratic regime, let the optimal cost-to-go from
the next predicted state be locally written as
\(
V_1(\xv_1)
=
\xv_1\tran\Pm_1\xv_1
+
2\pv_1\tran\xv_1
+
c_1
\),
where
\(
\Pm_1\succeq0
\)
is the quadratic cost-to-go matrix and
\(
\xv_1
=
\Am_p\bar{\xv}
+
\Bm_p\uv
+
\Em_p d_0
\).
Up to terms independent of
\(
\uv
\),
the fixed-action value therefore contains
\(
\uv\tran\Rm\uv
+
(\uv-\uv_{-1})\tran\Sm(\uv-\uv_{-1})
+
V_1(\xv_1)
\).
Its quadratic dependence on
\(
\uv
\)
is consequently
\(
\uv\tran
\bigl(
\Rm+\Sm+\Bm_p\tran\Pm_1\Bm_p
\bigr)
\uv
\).
Since the Hessian of
\(
\uv\tran\Mm\uv
\)
with respect to
\(
\uv
\)
is
\(
2\Mm
\),
it follows that
\(
\frac{1}{2}\Hm_Q
=
\Rm+\Sm+\Bm_p\tran\Pm_1\Bm_p
\).
Here,
\(
\Bm_p\tran\Pm_1\Bm_p
\)
captures the future lateral--yaw cost induced by the present steering
action. The resulting metric may become state- and context-dependent
when the local MPC model, weighting, admissible set, or active
constraint set changes.
\end{remark}

\section{Exact-Q Hybrid Imitation Objective}
\label{sec:hybrid_objective}

Let \(\calD\) denote a dataset collected from expert rollouts,
learner-induced rollouts, or both, with the corresponding MPC context
\(
(\uv_{-1},d_0,\ldots,d_{N-1})
\)
attached to each queried state but suppressed from the notation.

The conventional imitation loss is
\begin{equation}
\calL_{\mathrm{im}}^{\calD}(\thetav)
=
\bbE_{\xv\sim\calD}
\left[
\norm{
\pi_{\thetav}(\xv)-\pi_E(\xv)
}_{\Wm_u}^{2}
\right],
\qquad
\Wm_u=\Wm_u\tran\succeq0.
\label{eq:imitation_loss}
\end{equation}

To penalize actions according to their finite-horizon consequence, we
use
\begin{equation}
\calL_Q^{\calD}(\thetav)
=
\bbE_{\xv\sim\calD}
\left[
\Delta_Q(\xv;\thetav)
\right].
\label{eq:q_loss}
\end{equation}

The proposed hybrid objective is
\begin{equation}
\calL_{\mathrm{hyb}}^{\calD}(\thetav)
=
\alpha
\calL_{\mathrm{im}}^{\calD}(\thetav)
+
\beta
\calL_Q^{\calD}(\thetav),
\qquad
\alpha,\beta\geq0.
\label{eq:hybrid_loss}
\end{equation}

The two terms play complementary roles:
\(\calL_{\mathrm{im}}^{\calD}\) preserves local agreement with the
expert steering map, whereas
\(\calL_Q^{\calD}\) weights a learner action by its actual
finite-horizon control consequence.
This distinction is particularly relevant in lateral control because
equal steering mismatches need not be equally harmful; near tight
curves or active constraints, a small action error may induce a much
larger downstream lateral or yaw deviation.

\begin{remark}[Clarification of the context-dependent exact-\(Q\) metric]
\label{rem:q_metric_context}
The context dependence discussed in
Remark~\ref{rem:q_vs_instantaneous}
should be understood locally.
Under the unconstrained LTI quadratic representation used above,
\(
\frac{1}{2}\Hm_Q
=
\Rm+\Sm+\Bm_p\tran\Pm_1\Bm_p
\).
Hence, if
\(
\Rm
\),
\(
\Sm
\),
\(
\Bm_p
\),
and
\(
\Pm_1
\)
remain fixed, then
\(
\Hm_Q
\)
is fixed as well.
Changes in
\(
\bar{\xi}
=
(\bar{\xv},\uv_{-1},d_{0:N-1})
\)
may then shift the expert action
\(
\pi_E(\bar{\xi})
\)
and the corresponding optimal cost without changing the local
curvature of
\(
Q_E
\).
In particular, when the preview enters only through the additive term
\(
\Em_p d_k
\),
curvature preview alone does not change
\(
\Hm_Q
\)
within a fixed unconstrained quadratic regime.

For constrained MPC, the parametric value function is locally
piecewise quadratic. Let
\(
i
\)
and
\(
j
\)
denote two local critical regions associated with different active
constraint sets
\(
\mathcal A_i
\)
and
\(
\mathcal A_j
\),
respectively.
Within each region, the corresponding exact-\(Q\) function has a local
quadratic curvature
\(
\Hm_Q^{(i)}
\)
or
\(
\Hm_Q^{(j)}
\).
If the active set changes, the local quadratic value function may also
change, so that
\(
\Pm_1^{(i)}
\neq
\Pm_1^{(j)}
\)
and, consequently,
\(
\Hm_Q^{(i)}
=
2\bigl(
\Rm+\Sm+\Bm_p\tran\Pm_1^{(i)}\Bm_p
\bigr)
\)
may differ from
\(
\Hm_Q^{(j)}
=
2\bigl(
\Rm+\Sm+\Bm_p\tran\Pm_1^{(j)}\Bm_p
\bigr)
\).
Likewise, variations of the local plant model, weighting matrices, or
admissible set may modify
\(
\Bm_p
\),
\(
\Pm_1
\),
or both, and therefore modify
\(
\Hm_Q
\).

Thus, exact \(Q\) does not assign different penalties merely because
the state or curvature preview changes; rather, it inherits whatever
local action sensitivity is induced by the underlying MPC problem.
Equal steering mismatches receive different finite-horizon penalties
when the local optimization geometry, and hence the curvature of
\(
Q_E
\),
actually changes.
\end{remark}

\subsection{Differentiation Through the Exact Q-Function}
\label{subsec:q_gradient}

Let the fixed-action constraint in \eqref{eq:exact_q_ocp} be
\begin{equation}
\cv_0(\uv_0,\bar{\uv})
:=
\uv_0-\bar{\uv}
=
\mathbf{0},
\label{eq:fixed_action_constraint}
\end{equation}
with Lagrange multiplier \(\lambdav_0\).
Under local uniqueness and the standard strong-regularity conditions
of Linear Independence Constraint Qualification (LICQ), strict complementarity, and second-order sufficiency, the
envelope theorem yields
\begin{equation}
\nabla_{\bar{\uv}}
Q_E(\bar{\xv},\bar{\uv})
=
-\lambdav_0.
\label{eq:q_gradient_multiplier}
\end{equation}

Hence, by the chain rule,
\begin{equation}
\begin{aligned}
\nabla_{\thetav}
\calL_Q^{\calD}(\thetav)
={}&
\bbE_{\xv\sim\calD}
\Bigg[
\nabla_{\thetav}
\pi_{\thetav}(\xv)\tran
\\[-0.1em]
&\qquad\cdot
\left.
\nabla_{\uv}
Q_E(\xv,\uv)
\right|_{\uv=\pi_{\thetav}(\xv)}
\Bigg].
\end{aligned}
\label{eq:q_gradient_theta}
\end{equation}

Similarly,
\begin{equation}
\begin{aligned}
\nabla_{\thetav}
\calL_{\mathrm{hyb}}^{\calD}(\thetav)
={}&
\bbE_{\xv\sim\calD}
\Bigg[
2\alpha
\nabla_{\thetav}
\pi_{\thetav}(\xv)\tran
\Wm_u
\\
&\qquad\cdot
\bigl(
\pi_{\thetav}(\xv)-\pi_E(\xv)
\bigr)
\\
&\qquad+
\beta
\nabla_{\thetav}
\pi_{\thetav}(\xv)\tran
\\
&\qquad\cdot
\left.
\nabla_{\uv}
Q_E(\xv,\uv)
\right|_{\uv=\pi_{\thetav}(\xv)}
\Bigg].
\end{aligned}
\label{eq:hybrid_gradient_theta}
\end{equation}

Equation \eqref{eq:q_gradient_multiplier} provides the direct link
between the MPC expert and the learner: the multiplier associated with
the fixed first action quantifies the marginal increase in optimal
finite-horizon cost caused by perturbing the learner steering command.

\section{LFT Representation of the Neural Controller}


Consider an \(L\)-hidden-layer feedforward policy. Let
\(\zv_0=\xv\). For \(\ell=0,\ldots,L-1\),
\begin{equation}
\vv_\ell=\Wm_\ell\zv_\ell+\bv_\ell,\qquad
\zv_{\ell+1}=\bm{\phi}_\ell(\vv_\ell),
\end{equation}
and the output layer is
\begin{equation}
\uv=\Wm_L\zv_L+\bv_L.
\end{equation}


For equilibrium certification, let
\(
\vv_\ell^\star
\),
\(
\zv_\ell^\star
\),
and
\(
\uv^\star=\pi_{\thetav}(\mathbf{0})
\)
denote the internal and output equilibrium values induced by
\(
\xv=\mathbf{0}
\).
Define the deviation variables
\(
\widetilde{\vv}_\ell:=\vv_\ell-\vv_\ell^\star
\),
\(
\widetilde{\zv}_\ell:=\zv_\ell-\zv_\ell^\star
\),
and
\(
\widetilde{\uv}:=\uv-\uv^\star
\).
Subtracting the equilibrium network equations eliminates the biases and
gives
\(
\widetilde{\vv}_0=\Wm_0\xv
\),
\(
\widetilde{\vv}_\ell=\Wm_\ell\widetilde{\zv}_\ell
\)
for
\(
\ell=1,\ldots,L-1
\),
and
\(
\widetilde{\uv}=\Wm_L\widetilde{\zv}_L
\).
Moreover,
\(
\widetilde{\zv}_{\ell+1}
=
\widetilde{\phi}_\ell(\widetilde{\vv}_\ell)
\),
where
\(
\widetilde{\phi}_\ell(\eta)
:=
\phi_\ell(\vv_\ell^\star+\eta)
-
\phi_\ell(\vv_\ell^\star)
\)
satisfies
\(
\widetilde{\phi}_\ell(\mathbf{0})=\mathbf{0}
\).
Stack
\(
\widetilde{\vv}
=
\colop(\widetilde{\vv}_0,\ldots,\widetilde{\vv}_{L-1})
\)
and
\(
\widetilde{\wv}
=
\colop(\widetilde{\zv}_1,\ldots,\widetilde{\zv}_L)
\).
For notational simplicity, the tildes are omitted hereafter, so that the
equilibrium-shifted neural policy can be written as the following
LFT-like interconnection.
\begin{subequations}
\label{eq:lft_controller}
\begin{align}
\vv_k&=\Am_\pi(\thetav)\wv_k+\Bm_\pi(\thetav)\xv_k,\\
\wv_k&=\bm{\phi}(\vv_k),\\
\uv_k&=\Cm_\pi(\thetav)\wv_k+\Dm_\pi(\thetav)\xv_k.
\end{align}
\end{subequations}

For the general \(L\)-hidden-layer feedforward architecture,
\begin{equation}
\Am_\pi(\thetav)
=
\begin{bmatrix}
\mathbf{0} & \mathbf{0} & \mathbf{0} & \cdots & \mathbf{0}\\
\Wm_1      & \mathbf{0} & \mathbf{0} & \cdots & \mathbf{0}\\
\mathbf{0} & \Wm_2      & \mathbf{0} & \cdots & \mathbf{0}\\
\vdots     & \ddots     & \ddots     & \ddots & \vdots\\
\mathbf{0} & \cdots     & \mathbf{0} & \Wm_{L-1} & \mathbf{0}
\end{bmatrix}.
\label{eq:lft_general_A}
\end{equation}

The remaining matrices are compactly written as
\begin{equation}
\begin{aligned}
\Bm_\pi(\thetav)
&=
\begin{bmatrix}
\Wm_0\tran &
\mathbf{0} &
\cdots &
\mathbf{0}
\end{bmatrix}\tran,
\\
\Cm_\pi(\thetav)
&=
\begin{bmatrix}
\mathbf{0} &
\cdots &
\mathbf{0} &
\Wm_L
\end{bmatrix},
\qquad
\Dm_\pi(\thetav)=\mathbf{0}.
\end{aligned}
\label{eq:lft_general_BCD}
\end{equation}

Equivalently, the block entries of \(\Am_\pi(\thetav)\) satisfy
\begin{equation}
\big[\Am_\pi(\thetav)\big]_{ij}
=
\begin{cases}
\Wm_i, & i=j+1,\\
\mathbf{0}, & \text{otherwise},
\end{cases}
\qquad
i,j=1,\ldots,L.
\label{eq:lft_general_block_entries}
\end{equation}
Hence, \(\Am_\pi(\thetav)\) is strictly block-lower triangular and,
more specifically,
\begin{equation}
\Am_\pi(\thetav)^L=\mathbf{0}.
\end{equation}
Thus, \(\Am_\pi(\thetav)\) is nilpotent, and the feedforward neural
interconnection in \eqref{eq:lft_controller} is well posed.

Substituting \eqref{eq:lft_controller} into the regulation form \eqref{eq:regulation_vehicle_model} gives the closed-loop Lur'e system
\begin{subequations}
\label{eq:lure_system}
\begin{align}
\xv_{k+1}&=\Am_c^\theta\xv_k+\Bm_c^\theta\wv_k,\\
\vv_k&=\Bm_\pi(\thetav)\xv_k+\Am_\pi(\thetav)\wv_k,\\
\wv_k&=\bm{\phi}(\vv_k),
\end{align}
\end{subequations}
where
\begin{equation}
\Am_c^\theta=\Am_p+\Bm_p\Dm_\pi(\thetav),\qquad
\Bm_c^\theta=\Bm_p\Cm_\pi(\thetav).\label{eq:closed_loop_matrices}
\end{equation}

\section{Lyapunov--IQC Certificate}

Assume each activation is slope-restricted in $[0,1]$:
\begin{equation}
0\leq\frac{\phi_i(a)-\phi_i(b)}{a-b}\leq1,\qquad a\neq b.
\end{equation}
If $\phi_i(0)=0$, then $w_i(v_i-w_i)\geq0$. For a diagonal multiplier
\begin{equation}
\Lam=\diagop(\lambda_1,\ldots,\lambda_m)\succ0,
\end{equation}
the stacked sector IQC is
\begin{equation}
2\wv_k\tran\Lam(\vv_k-\wv_k)\geq0.
\end{equation}
Equivalently,
\begin{equation}
\begin{bmatrix}\vv_k\\ \wv_k\end{bmatrix}\tran
\Mm_\phi(\Lam)
\begin{bmatrix}\vv_k\\ \wv_k\end{bmatrix}
\geq0,\qquad
\Mm_\phi(\Lam)=
\begin{bmatrix}
\mathbf{0} & \Lam\\
\Lam & -2\Lam
\end{bmatrix}.
\end{equation}
Define
\begin{equation}
\xiv_k=\begin{bmatrix}\xv_k\\ \wv_k\end{bmatrix},\qquad
\Sm_\pi(\thetav)=
\begin{bmatrix}
\Bm_\pi(\thetav) & \Am_\pi(\thetav)\\
\mathbf{0} & \Id
\end{bmatrix}.
\end{equation}
Since $[\vv_k\tran\ \wv_k\tran]\tran=\Sm_\pi(\thetav)\xiv_k$, all admissible activation trajectories satisfy
\begin{equation}
\xiv_k\tran\Sm_\pi(\thetav)\tran\Mm_\phi(\Lam)\Sm_\pi(\thetav)\xiv_k\geq0.\label{eq:xi_iqc}
\end{equation}

Let $V(\xv)=\xv\tran\Pm\xv$, where $\Pm=\Pm\tran\succ0$. From \eqref{eq:lure_system},
\begin{equation}
V(\xv_{k+1})-V(\xv_k)=\xiv_k\tran\Hm(\Pm,\thetav)\xiv_k,
\end{equation}
with
\begin{align}
\Hm(\Pm,\thetav)
&=
\begin{bmatrix}
\Am_c^\theta & \Bm_c^\theta
\end{bmatrix}\tran
\Pm
\begin{bmatrix}
\Am_c^\theta & \Bm_c^\theta
\end{bmatrix}
-
\begin{bmatrix}
\Id & \mathbf{0}
\end{bmatrix}\tran
\Pm
\begin{bmatrix}
\Id & \mathbf{0}
\end{bmatrix}.
\end{align}
The Lyapunov--IQC stability matrix is
\begin{equation}
\calM(\Pm,\Lam,\thetav)
=
\Hm(\Pm,\thetav)
+
\Sm_\pi(\thetav)\tran\Mm_\phi(\Lam)\Sm_\pi(\thetav).\label{eq:stability_matrix}
\end{equation}

\begin{theorem}[Lyapunov--IQC stability]
Consider \eqref{eq:lure_system}. Suppose the activation satisfies \eqref{eq:xi_iqc}. If there exist $\Pm=\Pm\tran\succ0$ and $\Lam\succ0$ such that
\begin{equation}
\calM(\Pm,\Lam,\thetav)\prec0,
\end{equation}
then the origin of the neural closed-loop system is asymptotically stable.
\end{theorem}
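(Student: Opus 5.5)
The plan is an S-procedure argument. First we show that the quadratic Lyapunov function $V(\xv)=\xv\tran\Pm\xv$ strictly decreases along every trajectory of \eqref{eq:lure_system}, by a margin proportional to $\norm{\xv_k}^2$. Asymptotic stability then follows from the standard discrete-time Lyapunov theorem.

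We begin with well-posedness. Because $\Am_\pi(\thetav)$ is nilpotent, the implicit relations $\vv_k=\Bm_\pi\xv_k+\Am_\pi\wv_k$ and $\wv_k=\bm{\phi}(\vv_k)$ can be solved layer by layer. This gives $\wv_k$ as a unique function of $\xv_k$, continuous because each activation is Lipschitz with constant $1$ by the slope restriction. Applying the slope bound to the equilibrium-shifted activations $\widetilde{\phi}_\ell$ (which vanish at $\mathbf{0}$) layer by layer also gives $\norm{\wv_k}\le c\norm{\xv_k}$ for some constant $c>0$ built from the weight norms. In particular the origin is an equilibrium.

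Next comes the key inequality. For any admissible $\xiv_k=\colop(\xv_k,\wv_k)$,
\begin{equation*}
V(\xv_{k+1})-V(\xv_k)=\xiv_k\tran\Hm(\Pm,\thetav)\xiv_k
\le \xiv_k\tran\Hm(\Pm,\thetav)\xiv_k+\xiv_k\tran\Sm_\pi\tran\Mm_\phi(\Lam)\Sm_\pi\xiv_k
=\xiv_k\tran\calM(\Pm,\Lam,\thetav)\xiv_k ,
\end{equation*}
where the inequality adds the nonnegative IQC term \eqref{eq:xi_iqc}. Since $\calM\prec0$, there is $\epsilon>0$ with $\calM\preceq-\epsilon\Id$. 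Hence $V(\xv_{k+1})-V(\xv_k)\le-\epsilon\norm{\xiv_k}^2\le-\epsilon\norm{\xv_k}^2$.

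To conclude, combine this with $\lambda_{\min}(\Pm)\norm{\xv}^2\le V(\xv)\le\lmax(\Pm)\norm{\xv}^2$. This gives $V(\xv_{k+1})\le(1-\epsilon/\lmax(\Pm))V(\xv_k)$, a geometric contraction, so the origin is in fact globally exponentially stable, and in particular asymptotically stable. The contraction factor automatically lies in $[0,1)$, because $V(\xv_{k+1})\ge0$ forces $\epsilon\le\lmax(\Pm)$.

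The main obstacle is interpretive rather than computational: the IQC \eqref{eq:xi_iqc} must actually hold along closed-loop trajectories. This requires the sector condition to be applied to the shifted activations $\widetilde{\phi}_\ell$, which satisfy $\widetilde{\phi}_\ell(\mathbf{0})=\mathbf{0}$ and inherit the $[0,1]$ slope restriction. The result is the sector bound $w_i(v_i-w_i)\ge0$ in deviation coordinates. I would state this explicitly. I would also note that the conclusion concerns the regulation model \eqref{eq:regulation_vehicle_model} around the equilibrium induced by $\uv^\star$, which is the setting the hypothesis of the theorem already assumes.
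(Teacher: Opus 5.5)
Your proposal is correct and follows the same S-procedure route as the paper: because the IQC term is nonnegative along trajectories, $\calM\prec0$ forces $V(\xv_{k+1})-V(\xv_k)=\xiv_k\tran\Hm(\Pm,\thetav)\xiv_k<0$ for $\xv_k\neq\mathbf{0}$, and $\Pm\succ0$ then gives asymptotic stability. Your additions are extra care the paper leaves implicit, and they upgrade its pointwise strict decrease to global exponential stability: the uniform bound $\calM\preceq-\epsilon\Id$ with $\epsilon=-\lmax(\calM)$, which gives a geometric contraction of $V$, together with the well-posedness argument and the bound $\norm{\wv_k}\le c\norm{\xv_k}$.
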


\begin{IEEEproof}
For every admissible activation trajectory, \eqref{eq:xi_iqc} holds. If $\calM(\Pm,\Lam,\thetav)\prec0$, then for every nonzero $\xiv_k$,
\begin{equation}
\xiv_k\tran\left[
\Hm(\Pm,\thetav)+
\Sm_\pi(\thetav)\tran\Mm_\phi(\Lam)\Sm_\pi(\thetav)
\right]\xiv_k<0.
\end{equation}
The IQC term is nonnegative along the trajectory, hence $\xiv_k\tran\Hm(\Pm,\thetav)\xiv_k<0$. Therefore $V(\xv_{k+1})-V(\xv_k)<0$ for $\xv_k\neq\mathbf{0}$, and $\Pm\succ0$ proves asymptotic stability.
\end{IEEEproof}

Let
\begin{equation}
\Am=\Am_c^\theta,\quad \Bm=\Bm_c^\theta,\quad
\Em=\Bm_\pi(\thetav),\quad \Fm=\Am_\pi(\thetav).
\end{equation}
Then the matrix in \eqref{eq:stability_matrix} expands to
\begin{equation}
\calM=
\begin{bmatrix}
\Am\tran\Pm\Am-\Pm
&
\Am\tran\Pm\Bm+\Em\tran\Lam\\
\Bm\tran\Pm\Am+\Lam\Em
&
\Bm\tran\Pm\Bm+\Fm\tran\Lam+\Lam\Fm-2\Lam
\end{bmatrix}.\label{eq:M_block}
\end{equation}

The certified set is
\begin{equation}
\calS_{\mathrm{cert}}
=
\left\{
\thetav:
\exists \Pm=\Pm\tran\succ0,\ \Lam\succ0
\ \mathrm{s.t.}\ 
\calM(\Pm,\Lam,\thetav)\prec0
\right\}.
\end{equation}
The certificate margin is defined by
\begin{equation}
\mu(\Pm,\Lam,\thetav)=
-\lmax\left(\calM(\Pm,\Lam,\thetav)\right).
\end{equation}
A positive $\mu$ means the controller is certified. The logarithmic barrier used during training is
\begin{equation}
\calB_{\mathrm{stab}}(\Pm,\Lam,\thetav)
=
-\log\left(
-\lmax\left(\calM(\Pm,\Lam,\thetav)\right)
\right).\label{eq:barrier}
\end{equation}

\section{Certified Exact-Q Training}
\label{sec7}

The complete certified objective is
\begin{equation}
\calL_{\mathrm{cert}}^{\calD}(\thetav,\Pm,\Lam)
=
\alpha\calL_{\mathrm{im}}^{\calD}(\thetav)
+
\beta\calL_Q^{\calD}(\thetav)
+
\rho\calB_{\mathrm{stab}}(\Pm,\Lam,\thetav).\label{eq:cert_loss}
\end{equation}
The optimization problem is
\begin{align}
\min_{\thetav,\Pm,\Lam}\quad&
\calL_{\mathrm{cert}}^{\calD}(\thetav,\Pm,\Lam)\\
\mathrm{s.t.}\quad&
\Pm=\Pm\tran\succ0,\qquad \Lam\succ0.
\end{align}
To obtain an unconstrained parametrization, set
\begin{equation}
\Pm=\Lm\Lm\tran+\varepsilon_P\Id,\qquad
\Lam=\diagop(e^{\lambda_1},\ldots,e^{\lambda_m})
+\varepsilon_\Lambda\Id.
\end{equation}
where $\varepsilon_P,\varepsilon_\Lambda>0$. Training then minimizes
\begin{equation}
\min_{\thetav,\Lm,\lambdav}
\calL_{\mathrm{cert}}^{\calD}
\left(
\thetav,\Lm\Lm\tran+\varepsilon_P\Id,
\diagop(e^{\lambda_1},\ldots,e^{\lambda_m})
+\varepsilon_\Lambda\Id
\right).
\end{equation}

The barrier term grows rapidly as $\lmax(\calM)$ approaches zero from below. Consequently, the optimizer is discouraged from moving toward uncertified neural policies. This is important in the vehicle experiment because a visually accurate imitation network can still generate steering actions that destabilize the yaw dynamics when deployed outside the expert distribution.

A nominal exact-$Q$ update without certification is
\begin{equation}
\thetav_{t+1}^{\mathrm{nom}}
=
\thetav_t-\eta
\nabla_{\thetav}
\left[
\alpha\calL_{\mathrm{im}}^{\calD_t}(\thetav_t)+
\beta\calL_Q^{\calD_t}(\thetav_t)
\right].
\end{equation}

If this update violates the certificate, a safe projection is applied:
\begin{equation}
\begin{aligned}
\Pi_{\calS_{\mathrm{cert}}}(\bar{\thetav})
=
\argminop_{\thetav,\Pm,\Lam}\quad&
\norm{\thetav-\bar{\thetav}}^2
\\
\mathrm{s.t.}\quad&
\Pm=\Pm\tran\succ0,
\\
&
\Lam\succ0,
\\
&
\calM(\Pm,\Lam,\thetav)\prec0.
\end{aligned}
\end{equation}

\begin{proposition}[Certification after projection]
If the projection problem is feasible and returns $(\thetav^+,\Pm^+,\Lam^+)$, then $\thetav^+\in\calS_{\mathrm{cert}}$ and the closed loop is asymptotically stable.
\end{proposition}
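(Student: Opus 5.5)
The argument is a direct unpacking of definitions followed by an appeal to the Lyapunov--IQC stability theorem stated above. Suppose the projection problem is feasible and returns $(\thetav^+,\Pm^+,\Lam^+)$. Then this triple satisfies every constraint of the projection program, so $\Pm^+=\Pm^{+\mathsf T}\succ0$, $\Lam^+\succ0$, and $\calM(\Pm^+,\Lam^+,\thetav^+)\prec0$. These three conditions are exactly the defining conditions of $\calS_{\mathrm{cert}}$, with $(\Pm^+,\Lam^+)$ serving as the witness pair. Hence $\thetav^+\in\calS_{\mathrm{cert}}$.

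For stability, form the closed loop \eqref{eq:lure_system} with parameters $\thetav^+$, that is, with $\Am_c^{\theta^+}$, $\Bm_c^{\theta^+}$, $\Bm_\pi(\thetav^+)$, $\Am_\pi(\thetav^+)$. Three hypotheses of the Lyapunov--IQC theorem need checking.

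First, the interconnection must be well posed. This holds because $\Am_\pi(\thetav^+)$ keeps the strictly block-lower-triangular structure \eqref{eq:lft_general_A} for any weight values, so it is nilpotent. Here we assume the projection is taken over the network weights, so the architecture, and hence this structure, is unchanged.

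Second, the activations must satisfy \eqref{eq:xi_iqc}. The equilibrium shift makes $\widetilde{\phi}_\ell(\mathbf0)=\mathbf0$, and slope restriction in $[0,1]$ is invariant under the shift. The sector IQC therefore holds with the multiplier $\Lam^+$.

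Third, the LMI $\calM(\Pm^+,\Lam^+,\thetav^+)\prec0$ must hold, and this is given directly by feasibility. The theorem then gives $V(\xv_{k+1})-V(\xv_k)<0$ for $\xv_k\neq\mathbf0$ with $V(\xv)=\xv\tran\Pm^+\xv$, and hence asymptotic stability of the origin.

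The main obstacle is a technical subtlety in the phrase ``returns'' rather than in the stability argument. The feasible set contains the strict inequalities $\Pm\succ0$ and $\calM\prec0$, so it is not closed, and the infimum of $\norm{\thetav-\bar{\thetav}}^2$ may fail to be attained.

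I would handle this by reading the hypothesis literally: a returned triple is, by assumption, an element of the feasible set. The claim then rests only on feasibility, not on optimality, and holds for any feasible (including suboptimal) point that a solver outputs.

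A remark should also address numerical practice, where solvers enforce $\calM\preceq-\epsilon\Id$ and $\Pm\succeq\epsilon\Id$ with $\epsilon>0$. These margin versions define a closed set, which restores existence of a minimizer, and they still imply the strict inequalities. The conclusion is therefore unaffected, and the certified margin becomes $\mu(\Pm^+,\Lam^+,\thetav^+)\geq\epsilon>0$.

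Finally, I would note that the equilibrium $\uv^\star=\pi_{\thetav^+}(\mathbf0)$ is recomputed for the projected weights, so the statement concerns stability of the shifted equilibrium, as in the theorem.
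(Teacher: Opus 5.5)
Your argument is correct and matches the paper's intended argument. The paper states this proposition without a written proof, and your reasoning is the one it relies on: the constraints of the projection problem make $(\Pm^+,\Lam^+)$ a witness for $\thetav^+\in\calS_{\mathrm{cert}}$, and the Lyapunov--IQC theorem, whose sector and well-posedness hypotheses you check correctly, then gives asymptotic stability.

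One aside is wrong, though it does not affect the proof. Imposing $\Pm\succeq\epsilon\Id$ and $\calM\preceq-\epsilon\Id$ does not restore existence of a minimizer. The reason is that $\calM$ is linear in $(\Pm,\Lam)$ for fixed $\thetav$. Rescaling any strict certificate therefore shows that the $\thetav$-projection of the margin-constrained set is still the open set $\calS_{\mathrm{cert}}$. The distance from $\bar{\thetav}\notin\calS_{\mathrm{cert}}$ to an open set is never attained, unless you also bound or normalize $(\Pm,\Lam)$.
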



\section{Certified DAgger}
\label{sec8}

The data distribution generated by the expert is not identical to the
data distribution induced by the learned controller. In particular,
the supervised dataset is typically collected under the expert policy
and therefore reflects the expert state-occupancy distribution, whereas
deployment of
\(
\pi_{\thetav}
\)
induces its own on-policy state distribution. Even when the
instantaneous imitation error is small, each steering mismatch perturbs
the subsequent vehicle state, so the resulting closed-loop trajectory
gradually deviates from the expert trajectory. These deviations
accumulate over time, leading to the well-known compounding-error or
covariate-shift phenomenon in sequential imitation learning. As a
consequence, the learned policy may visit states that are rare or absent
in the original supervised dataset, where its approximation accuracy is
poorly supported and further control errors may be amplified. Thus, a
small pointwise imitation error can induce a progressive distributional
shift between the expert and learner rollouts. To reduce this mismatch,
the training distribution must be augmented with states actually visited
by the learned controller and relabeled by the expert. However, since
such on-policy rollouts directly affect the closed-loop vehicle
trajectory, unrestricted data aggregation may expose the system to
uncertified neural policies. Therefore, to reduce the distribution
mismatch while preserving closed-loop safety, the data aggregation
procedure is restricted to certified policies.

\begin{figure}[h]
    \centering

    \begin{minipage}[c]{0.47\textwidth}
        \centering
        \includegraphics[
            width=\linewidth,
            keepaspectratio
        ]{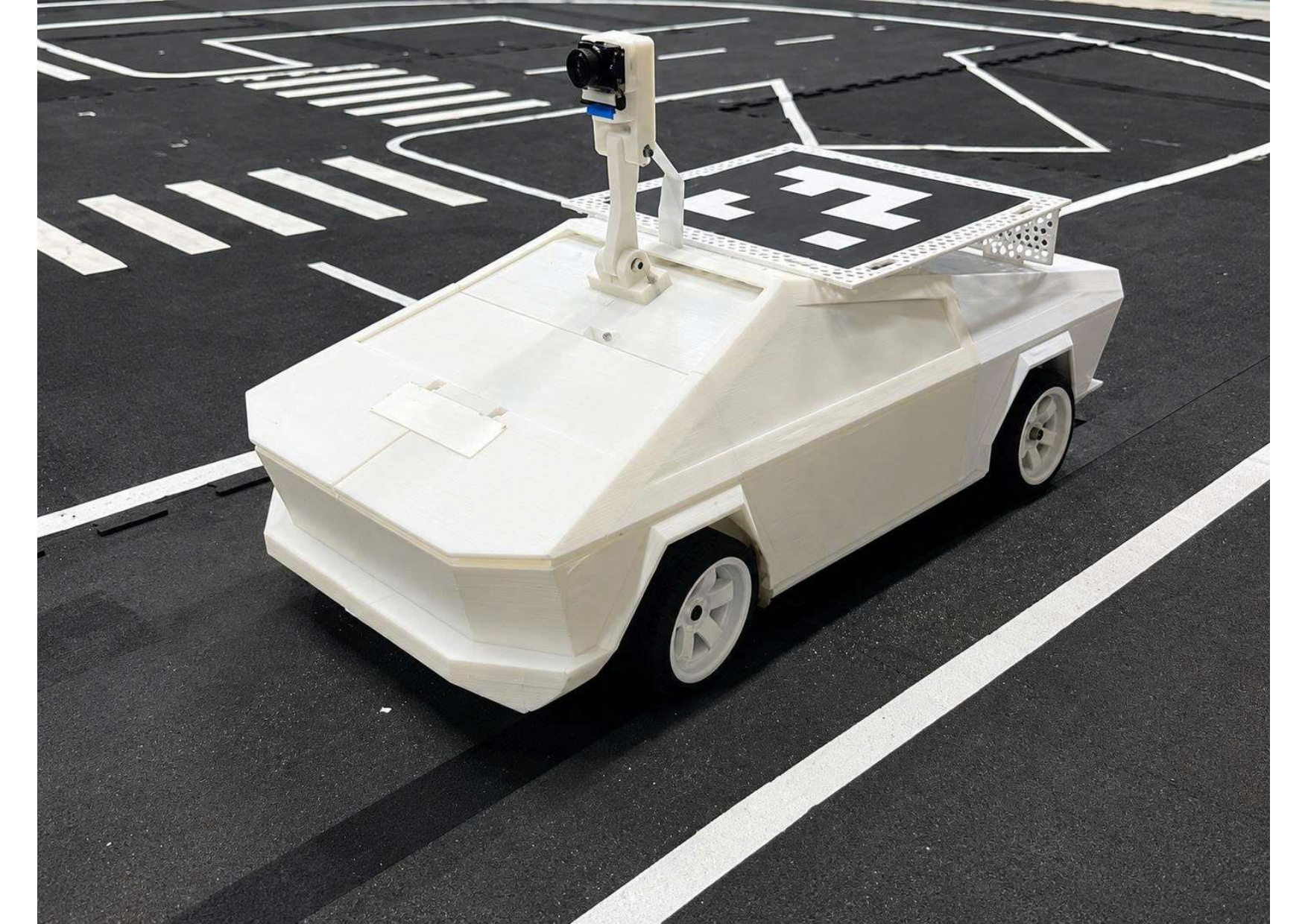}
    \end{minipage}
    \hfill
    \begin{minipage}[c]{0.47\textwidth}
        \centering
        \includegraphics[
            width=\linewidth,
            keepaspectratio
        ]{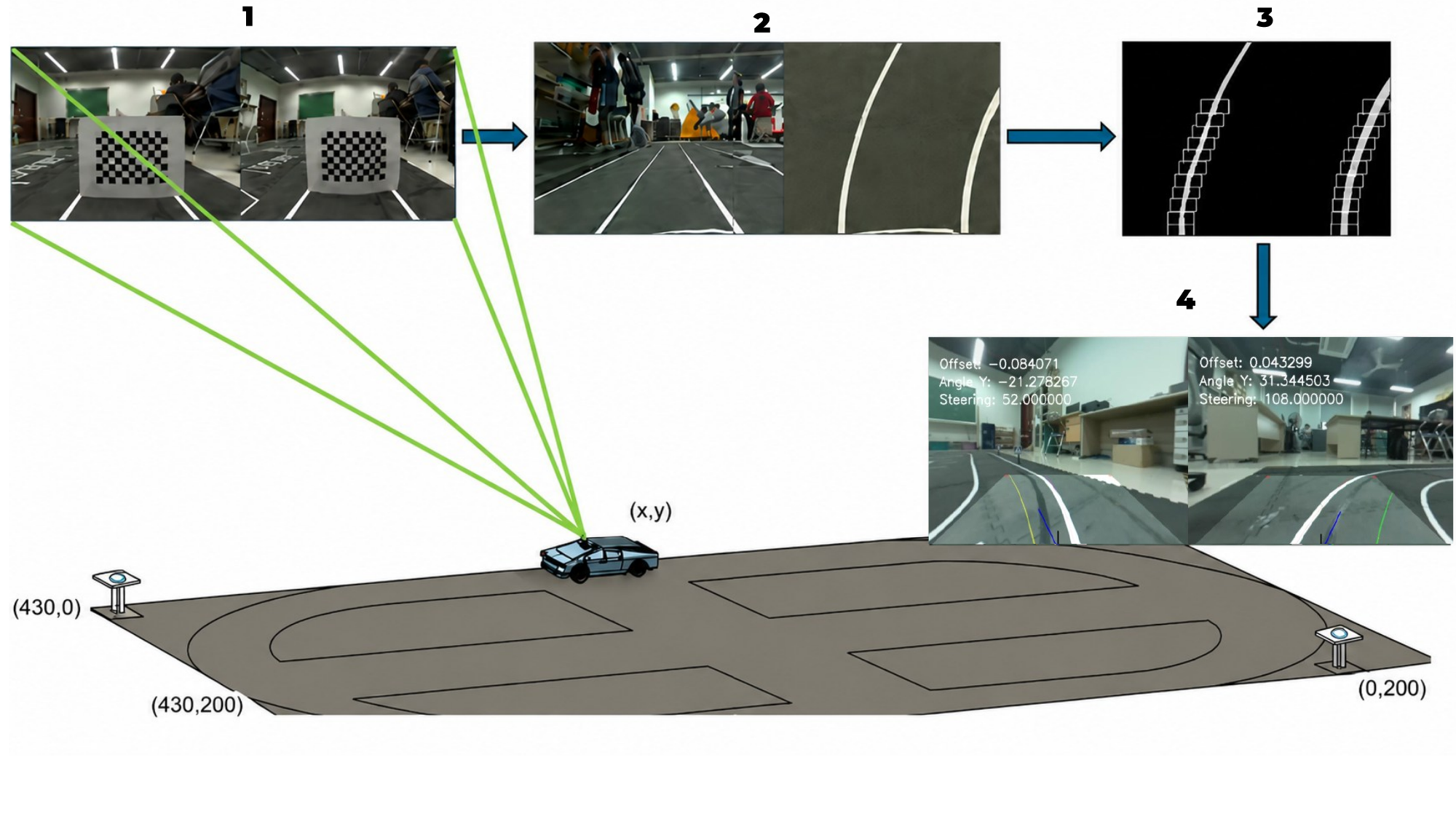}
    \end{minipage}

    \caption{Experimental platform and onboard lane-perception pipeline:
    (top) small-scale autonomous vehicle operating on the indoor track and
    (bottom) vision-processing pipeline used to estimate the lane geometry and
    derive the path-referenced quantities required for lateral control.}
    \label{fig:platform_pipeline}
\end{figure}

At DAgger iteration $i$, the certified policy $\pi_{\thetav_i}$ is deployed:
\begin{equation}
\xv_{k+1}=\Am_p\xv_k+\Bm_p\pi_{\thetav_i}(\xv_k)+\Em_p d_k.
\end{equation}
Let $\calX_i^{\mathrm{roll}}=\{\xv_k^{(i)}\}_{k=0}^{T_i}$ be the states collected during this rollout. The expert is queried on those states and the dataset is updated by
\begin{equation}
\calD_{i+1}
=
\calD_i\cup
\left\{
\left(\xv_k^{(i)},\pi_E(\xv_k^{(i)})\right):
\xv_k^{(i)}\in\calX_i^{\mathrm{roll}}
\right\}.
\end{equation}
The next nominal policy is trained on $\calD_{i+1}$ using the exact-$Q$ hybrid objective, after which barrier training or projection returns the deployed policy. The procedure is summarized in Algorithm~\ref{alg:certified_dagger}.


\begin{lemma}
\label{lem:certified_rollouts}
    Assume $\thetav_0\in\calS_{\mathrm{cert}}$ and each barrier update or projection step returns a policy in $\calS_{\mathrm{cert}}$. Then every rollout used for data aggregation is generated by an asymptotically stable neural closed-loop controller.
\end{lemma}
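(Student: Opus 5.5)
The plan is an induction on the DAgger iteration index $i$. The claim to prove is $\thetav_i\in\calS_{\mathrm{cert}}$ for every $i\geq0$. Once this holds, the stability theorem (Lyapunov--IQC stability) applies to each deployed policy $\pi_{\thetav_i}$, since by definition of $\calS_{\mathrm{cert}}$ there exist $\Pm_i=\Pm_i\tran\succ0$ and $\Lam_i\succ0$ with $\calM(\Pm_i,\Lam_i,\thetav_i)\prec0$. Every rollout $\calX_i^{\mathrm{roll}}$ that enters $\calD_{i+1}$ is produced by exactly one such $\pi_{\thetav_i}$, so the lemma follows.

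The base case $i=0$ is the hypothesis $\thetav_0\in\calS_{\mathrm{cert}}$. For the inductive step, I would examine how Algorithm~\ref{alg:certified_dagger} produces the deployed policy $\thetav_{i+1}$ from $\calD_{i+1}$. The nominal exact-$Q$ update may leave $\calS_{\mathrm{cert}}$, but it is never deployed directly. The policy actually deployed is the output of either a barrier update or the safe projection $\Pi_{\calS_{\mathrm{cert}}}$.

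For the projection branch, the proposition on certification after projection gives $\thetav_{i+1}\in\calS_{\mathrm{cert}}$ whenever the projection is feasible. For the barrier branch, I would note two things. First, $\calB_{\mathrm{stab}}$ is finite only when $\lmax(\calM)<0$, so any accepted iterate with finite loss satisfies $\calM\prec0$. Second, the parametrizations $\Pm=\Lm\Lm\tran+\varepsilon_P\Id\succ0$ and $\Lam\succ0$ make the multipliers admissible automatically. In any case, the lemma's hypothesis states directly that each such step returns a policy in $\calS_{\mathrm{cert}}$, so this inclusion can simply be invoked.

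The main obstacle is the link between certification and the rollout model. The rollout dynamics include the curvature forcing $\Em_p d_k$ and the bias shift $\uv^\star=\pi_{\thetav}(\mathbf{0})$, whereas the theorem certifies the equilibrium-shifted regulation Lur'e system~\eqref{eq:lure_system}. I would handle this the same way as the certification derivation: $d_k$ is treated as compensated by feedforward, or as a known exogenous command around which the local regulation model is formed, and the deviation coordinates remove the biases. Under that standing convention the deployed closed loop coincides with~\eqref{eq:lure_system}, and ``asymptotically stable'' refers to the origin of this system. Finally, the slope-restriction assumption is independent of $\thetav$, so the IQC~\eqref{eq:xi_iqc} holds for every deployed policy and the theorem applies uniformly in $i$.
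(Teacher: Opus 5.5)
Your argument is the intended one: the paper states this lemma without a separate proof, and it follows as you describe, by induction on $i$ using the hypothesis, the certification-after-projection proposition, and the Lyapunov--IQC stability theorem applied to each deployed $\thetav_i$ under the paper's regulation-model convention. The only slip is your claim that the nominal policy is never deployed directly: Algorithm~\ref{alg:certified_dagger} also has an acceptance branch that deploys a nominal policy passing the Lyapunov--IQC test as is, so list it as a third case, which is immediate because passing that test means exactly $\thetav_{i+1}\in\calS_{\mathrm{cert}}$.
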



\begin{figure*}[!t]
    \centering
    \includegraphics[
        width=0.92\textwidth,
        keepaspectratio
    ]{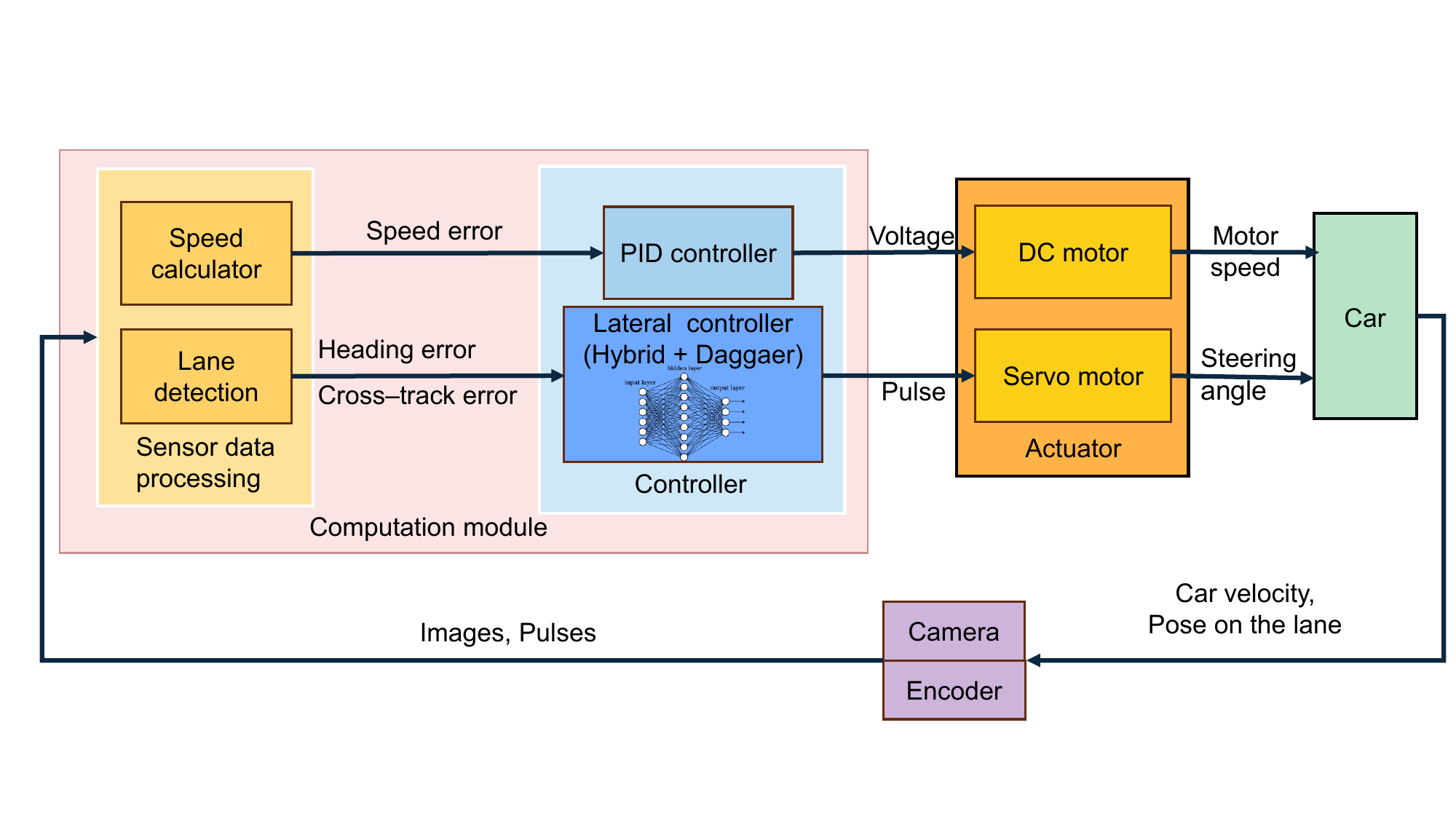}
    \caption{
        Experimental closed-loop architecture. The longitudinal loop uses
        encoder feedback and PID control to regulate the DC drive motor,
        whereas the lateral loop uses either the MPC expert or a learned neural
        policy to command the steering servo.
    }
    \label{fig:control_architecture}
\end{figure*}
\section{Experimental Deployment}

\subsection{Physical platform and control architecture}

The proposed controllers were evaluated on a small-scale autonomous
vehicle operating on a CAD-referenced indoor track. The closed-loop
control architecture is shown in Fig.~\ref{fig:control_architecture}.

For clarity, the controllers compared in the experiments are defined
as follows. MPC denotes the finite-horizon expert developed in
Sec.~\ref{sec:exact_q}. BC denotes the conventional behavior-cloning policy trained
with the instantaneous imitation objective introduced in
Sec.~\ref{sec:hybrid_objective}, whereas Exact-\(Q\) denotes the
task-aware policy trained using the exact MPC-induced \(Q\)-objective
developed in Sec.~\ref{sec:exact_q}. Hybrid denotes the proposed
certified policy trained with the combined objective of Sec.~\ref{sec7},
which incorporates imitation, Exact-\(Q\), and the Lyapunov--IQC
stability barrier. The prefix D+ denotes the corresponding policy after
the certified DAgger refinement of Sec.~\ref{sec8}. Accordingly, BC+D,
D+Exact-\(Q\), and D+Hybrid denote the DAgger-refined versions of BC,
Exact-\(Q\), and Hybrid, respectively.

The architecture consists of separate longitudinal and lateral control
loops. The longitudinal loop regulates vehicle speed using
wheel-encoder feedback and a PID controller that commands the DC
drive motor. The lateral loop generates the steering command applied
to the servo motor and is implemented by either the MPC expert or a
learned neural policy. During deployment, the neural policy replaces
only the MPC lateral controller, while the longitudinal PID loop
remains unchanged.





A forward-facing monocular camera provides the visual input for lane
perception. The captured images are transformed into a bird's-eye
view, from which the lane boundaries and centerline geometry are
estimated. These geometric features are then used to derive the
path-referenced quantities required for lateral control, including
lateral deviation, heading error, and local curvature. The
experimental platform and onboard lane-perception pipeline are shown
in Fig.~\ref{fig:platform_pipeline}.




\subsection{Deployment protocol}

\begin{algorithm}[!htb]
\caption{Certified Exact-$Q$ DAgger}
\label{alg:certified_dagger}
\begin{algorithmic}[1]
\State Construct the path-referenced MPC expert $\pi_E$ from \eqref{eq:expert_ocp}.
\State Collect $\calD_0=\{(\xv_j,\pi_E(\xv_j))\}_{j=1}^{N_0}$ from expert rollouts.
\State Initialize a neural policy $\pi_{\thetav_0}$ and certificate variables $(\Pm_0,\Lam_0)$ such that $\thetav_0\in\calS_{\mathrm{cert}}$.
\For{$i=0,\ldots,M-1$}
\State Deploy only the certified policy $\pi_{\thetav_i}$ on the vehicle or simulator.
\State Record learner-induced states $\calX_i^{\mathrm{roll}}$.
\State Query the MPC expert $\pi_E(\xv)$ for all $\xv\in\calX_i^{\mathrm{roll}}$.
\State Aggregate the dataset $\calD_{i+1}$.
\State Train a nominal policy using $\alpha\calL_{\mathrm{im}}+\beta\calL_Q$.
\If{the nominal policy satisfies the Lyapunov--IQC certificate}
\State Accept it as $\pi_{\thetav_{i+1}}$.
\Else
\State Apply safe projection or continue barrier training until $\thetav_{i+1}\in\calS_{\mathrm{cert}}$.
\EndIf
\EndFor
\State \Return the certified neural MPC approximation $\pi_{\thetav_M}$.
\end{algorithmic}
\end{algorithm}

Using the experimental platform described above, the MPC controller
serves as the expert for demonstration collection and policy
comparison. Initial MPC rollouts are used to train the BC policy,
which subsequently initializes the Exact-$Q$ and hybrid policies.
These policies are then refined using the objectives introduced in
Sections~V and VIII.

During deployment, the neural policy receives the observation vector
\begin{equation}
\begin{aligned}
\bm{o}_k = \operatorname{col}\big(
& e_y(k),\, e_\psi(k),\,
\kappa_{0,k},\, \kappa_{1,k},\\
& \kappa_{2,k},\, \kappa_{3,k},\,
V_x(k),\, \delta_{k-1}
\big).
\end{aligned}
\label{eq:deployment_observation}
\end{equation}

The corresponding policy and MPC configurations are summarized in
Table~\ref{tab:controller_parameters}.

\begin{table}[!htb]
\centering
\caption{Main controller and deployment parameters.}
\label{tab:controller_parameters}
\scriptsize
\setlength{\tabcolsep}{3.5pt}
\renewcommand{\arraystretch}{0.95}
\begin{tabular}{lc}
\toprule
Parameter & Value \\
\midrule
Policy architecture & $8$--$32$--$32$--$1$ \\
Hidden activation & $\tanh$ \\
Policy input dimension & $8$ \\
Policy curvature preview & $4$ samples \\
MPC prediction horizon $N$ & $10$ \\
MPC curvature preview & $10$ samples \\
MPC weights $(q_1,q_2,r)$ & $(2300,140,12)$ \\
Steering limit & $\pm28^\circ$ \\
Nominal speed $V_x$ & $0.15~\mathrm{m/s}$ \\
Control-command period & $20~\mathrm{ms}$ \\
\bottomrule
\end{tabular}
\end{table}

During DAgger data collection, the MPC expert is evaluated at
learner-visited states to provide expert labels. In the experimental
implementation, it also acts as a safety supervisor when predefined
intervention conditions are violated.

\subsection{Physical closed-loop evaluation}

Physical closed-loop performance is evaluated using the quantitative
metrics reported in Table~\ref{tab:real_results}, which summarize the
main tracking and steering characteristics observed during the
real-vehicle experiments. Lateral and heading errors are obtained from
the externally measured trajectory, whereas steering and Exact-$Q$
quantities are computed from the corresponding onboard control logs.

\begin{table}[H]
\centering
\caption{Real-vehicle closed-loop performance. Lower is better.}
\label{tab:real_results}
\scriptsize
\setlength{\tabcolsep}{2.2pt}
\renewcommand{\arraystretch}{0.95}
\begin{tabular}{lccccc}
\toprule
Controller
& $\mathrm{RMSE}_{y}$
& $\mathrm{RMSE}_{\psi}$
& $\mathrm{MAE}_{\delta}$
& $\overline{\Delta Q}$
& $\mathrm{RMSE}_{\Delta\delta}$ \\
& [m] & [rad] & [deg] & & [deg/step] \\
\midrule
MPC
& 0.0415 & 0.2618 & -- & -- & 6.41 \\
BC
& 0.0565 & 0.2607 & 3.21 & 0.0216 & 0.608 \\
BC+D
& 0.0466 & 0.2713 & \textbf{1.82} & \textbf{0.0097} & 1.06 \\
Exact-$Q$
& 0.0558 & 0.2608 & 2.78 & 0.0152 & 0.812 \\
D+Exact-$Q$
& 0.0394 & 0.2715 & 2.61 & 0.0192 & 1.40 \\
Hybrid
& 0.0430 & 0.2680 & 2.89 & 0.0350 & 7.67 \\
D+Hybrid
& \textbf{0.0388} & \textbf{0.2538} & 2.31 & 0.0165 & 2.00 \\
\bottomrule
\end{tabular}
\end{table}


\begin{figure}[H]
    \centering
    \includegraphics[
        width=\columnwidth,
        height=1.5\textheight,
        keepaspectratio
    ]{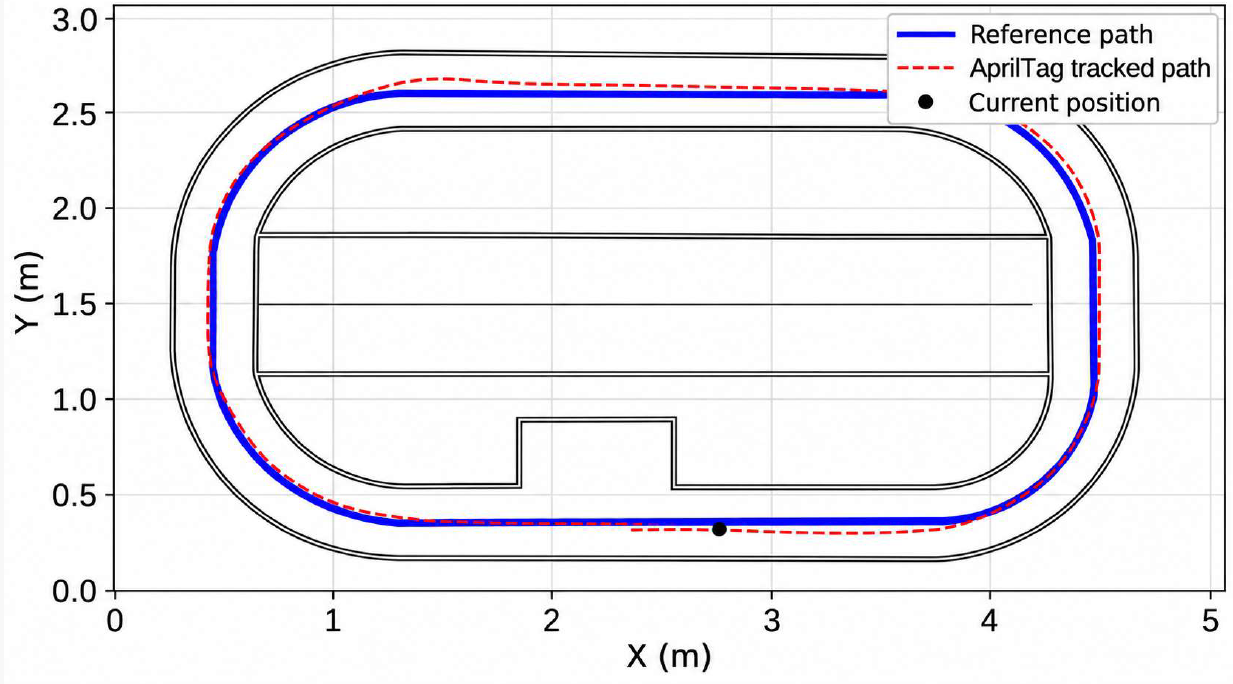}
    \caption{Representative trajectory measured by the external
    AprilTag localization system relative to the CAD/DXF reference
    path.}
    \label{fig:tracking_plot}
\end{figure}

\begin{figure}[H]
    \centering
    \includegraphics[width=\columnwidth]
    {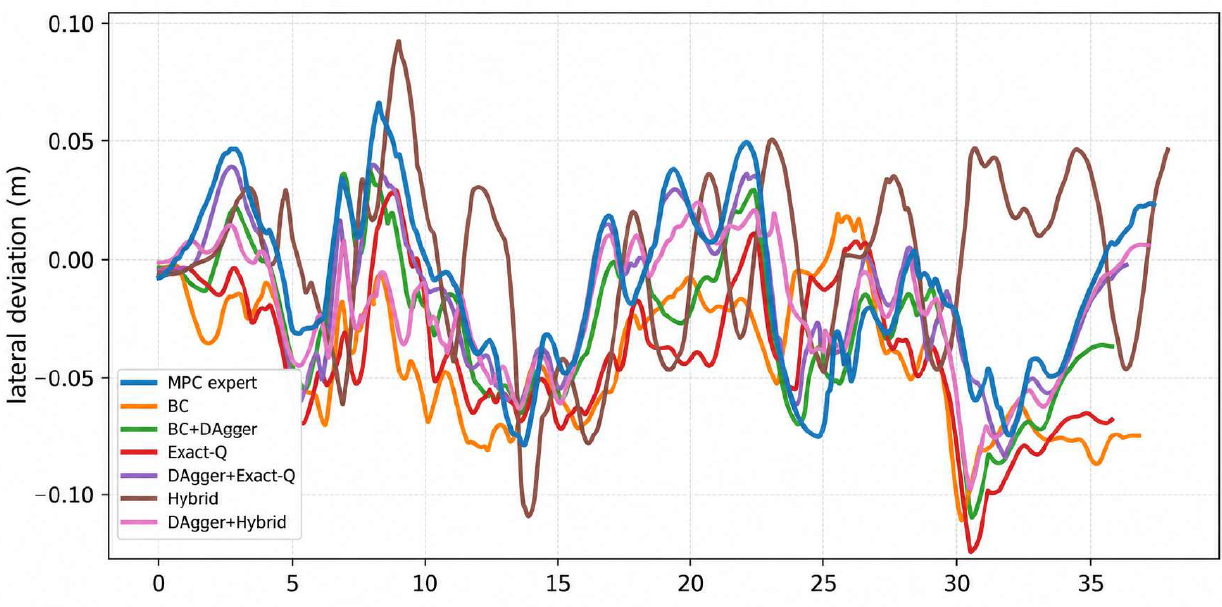}
    \caption{Lateral deviation of the MPC expert and learned policies
    during representative real-vehicle rollouts.}
    \label{fig:lateral_tracking}
\end{figure}

\begin{figure}[H]
    \centering
    \includegraphics[width=\columnwidth]
    {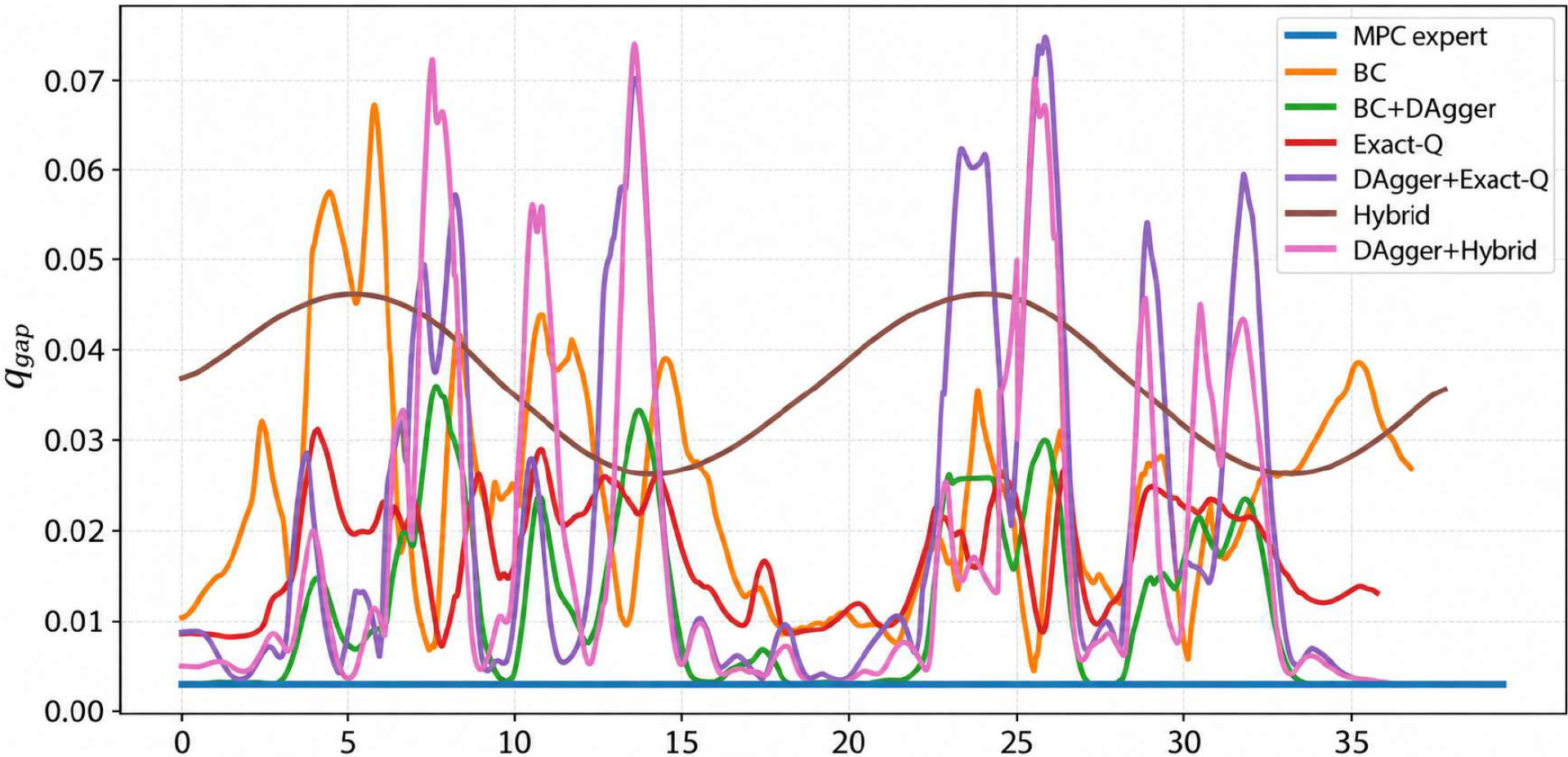}
    \caption{Exact-$Q$ gap of the learned policies relative to the MPC
    expert during representative real-vehicle rollouts.}
    \label{fig:exact_q_gap}
\end{figure}

Figures~\ref{fig:tracking_plot}--\ref{fig:exact_q_gap} and
Table~\ref{tab:real_results} provide complementary physical,
time-domain, and task-aware evaluations of the learned controllers.
Figure~\ref{fig:tracking_plot} shows that the complete
perception--control pipeline maintains the vehicle close to the
CAD/DXF reference path during real operation, while
Fig.~\ref{fig:lateral_tracking} resolves the temporal tracking behavior
that is compressed into the scalar lateral RMSE values of
Table~\ref{tab:real_results}. In contrast,
Fig.~\ref{fig:exact_q_gap} evaluates the learned steering actions
through the finite-horizon MPC performance degradation quantified by
\(
\Delta_Q
\),
and therefore measures a different notion of approximation quality
from pointwise steering error.

This distinction is consistent with
Remark~\ref{rem:q_vs_instantaneous}. The steering MAE measures direct
action agreement with the expert policy
\(
\pi_E
\),
whereas
\(
\Delta_Q
\)
measures the increase in optimal finite-horizon cost caused by fixing
the learner action and re-optimizing the remaining MPC horizon.
By Proposition~\ref{prop:q_minimizer},
\(
\Delta_Q\geq0
\)
for every feasible learner action; hence, the values in
Fig.~\ref{fig:exact_q_gap} can be interpreted directly as
task-dependent degradation relative to the MPC expert rather than as
a purely geometric action mismatch.

DAgger improves lateral tracking in all three policy families. The
lateral RMSE decreases from
\(
0.0565
\)
to
\(
0.0466~\mathrm{m}
\)
for BC, from
\(
0.0558
\)
to
\(
0.0394~\mathrm{m}
\)
for Exact-$Q$, and from
\(
0.0430
\)
to
\(
0.0388~\mathrm{m}
\)
for Hybrid. This consistent reduction supports the motivation for
certified DAgger developed above: aggregation at states actually
visited by the learner reduces the expert--learner state-distribution
mismatch created by compounding closed-loop errors, while restricting
rollouts to certified policies preserves the stability guarantees used
during data collection.

Among the learned controllers, DAgger+Hybrid achieves the lowest
lateral RMSE
\(
0.0388~\mathrm{m}
\)
and heading RMSE
\(
0.2538~\mathrm{rad}
\),
whereas BC+DAgger yields the smallest steering MAE
\(
1.82^\circ
\)
and mean Exact-$Q$ gap
\(
0.0097
\).
Moreover, introducing the Exact-$Q$ objective alone reduces the mean
\(Q\)-gap from
\(
0.0216
\)
for BC to
\(
0.0152
\)
for Exact-$Q$, which is consistent with the intended task-aware
training effect. Importantly, the policy with the smallest steering
imitation error is not the policy with the smallest closed-loop
tracking error. This provides an experimental counterpart to
Remark~\ref{rem:q_vs_instantaneous}: closer pointwise reproduction of
the MPC steering command does not necessarily imply a smaller
downstream lateral--yaw consequence.

The Hybrid results further indicate that the two training signals are
complementary rather than interchangeable. The imitation term promotes
local agreement with the expert steering map, whereas Exact-$Q$
incorporates the finite-horizon consequence of the current learner
action. Taken together, the real-vehicle results support the theoretical
motivation developed above: DAgger addresses deployment-induced
distribution shift, Exact-$Q$ provides task-aware supervision beyond
instantaneous action matching, and their combination can improve
physical closed-loop tracking without requiring exact pointwise
reproduction of the MPC steering policy.

\begin{table}[H]
\centering
\caption{Lyapunov--IQC certification of the learned neural policies.
The certificate margin is defined as
\(\mu_{\mathrm{cert}}
:=-\lambda_{\max}(\calM(\Pm,\Lam,\thetav))\).}
\label{tab:certificate_margin}
\renewcommand{\arraystretch}{1.08}
\setlength{\tabcolsep}{3.5pt}
\begin{tabular}{lccc}
\hline
\textbf{Policy}
&
\(\boldsymbol{\lambda_{\max}(\calM)}\)
&
\(\boldsymbol{\mu_{\mathrm{cert}}}\)
&
\textbf{Cert.}
\\
\hline
BC              & \(-0.0214\) & \(0.0214\) & Yes \\
Exact-\(Q\)     & \(-0.0278\) & \(0.0278\) & Yes \\
Hybrid          & \(-0.0316\) & \(0.0316\) & Yes \\
D+BC            & \(-0.0249\) & \(0.0249\) & Yes \\
D+Exact-\(Q\)   & \(-0.0307\) & \(0.0307\) & Yes \\
D+Hybrid        & \(-0.0362\) & \(0.0362\) & Yes \\
\hline
\end{tabular}
\end{table}

Finally, Table~\ref{tab:certificate_margin} reports the Lyapunov--IQC
certificate for every deployed policy. All six policies satisfy
\(\lmax(\calM)<0\), and the margin \(\mu_{\mathrm{cert}}\) increases
monotonically from BC to Hybrid and again after DAgger refinement,
with D+Hybrid attaining the largest margin (\(0.0362\)). Hence the
tracking improvements above were obtained without leaving the
certified set \(\calS_{\mathrm{cert}}\), which is precisely the
guarantee required by Lemma~\ref{lem:certified_rollouts} for the rollouts used during data
aggregation.

The physical experiments demonstrate that action imitation, task-aware
Exact-$Q$ evaluation, and closed-loop tracking provide complementary
measures of neural-MPC deployment quality.

\section{Conclusion}

This paper presented a certified exact-$Q$ imitation-learning framework
for neural approximation of MPC-based lateral control. The
path-referenced bicycle model used an MPC expert as both demonstrator
and finite-horizon value oracle by fixing each learner action as the
first input and re-optimizing the remaining horizon. The resulting
exact-$Q$ gap was combined with behavioral imitation and a
Lyapunov--IQC barrier. An LFT neural representation with sector IQCs
and a dense matrix inequality certified asymptotic stability, while
safe projection and certified DAgger kept data aggregation within the
certified policy set. Experiments on a CAD-referenced,
AprilTag-localized vehicle validated the integration of exact MPC value
imitation and certified neural closed-loop control.

\section*{Acknowledgment}
HIDDEN

\end{document}